\newcommand{\nc}{\newcommand}
\nc{\heading}[1]{\begin{center} \large \bf #1 \end{center}}
\newcommand{\oA}{\overline{A}}
\newcommand{\oB}{\overline{B}}
\newcommand{\oI}{\overline{I}}
\newcommand{\oR}{\overline{R}}
\newcommand{\Ex}{\mathsf{E}}
\renewcommand{\Pr}{\mathsf{P}}
\theoremstyle{plain}
\newtheorem{prop}{Proposition}
\newtheorem{definition}{Definition}
\newglossaryentry{latex}
{
    name=latex,
    description={Is a mark up language specially suited
    for scientific documents}
}
\newglossaryentry{maths}
{
    name=mathematics,
    description={Mathematics is what mathematicians do}
}
\begin{document}


\date{January 22, 2021}
\title{Stochastic Modeling of an Infectious Disease \\
 {Part III-A: Analysis of Time-Nonhomogeneous Models
 }\footnote{An earlier version of this work was presented at ITC-32 \url{https://itc32.org/keynote.html} held in Osaka, Japan on September 22-24, 2020.  For the slides \cite{kobayashi:2020s} and You Tube video, click on \url{hp.hisashikobayashi.com}.} }
\author{Hisashi Kobayashi\footnote{The Sherman Fairchild University Professor of Electrical Engineering and Computer Science, Emeritus.
Email: Hisashi@Princeton.EDU, Website: \url{hp.HisashiKobayashi.com}, Wikipedia: \url{https://en.wikipedia.org/wiki/Hisashi_Kobayashi} 
} \\
   Dept. of  Electrical Engineering \\
   Princeton University \\
   Princeton, NJ 08544, U.S.A.}

\maketitle

\begin{abstract}

We extend our BDI (birth-death-immigration) process based stochastic model of an infectious disease to time-nonhomogeneous    
\footnote{A Markov process is called time-homogeneous (or simply homogeneous), if the state transition probabilities do not depend on time. It is called nonhomogeneous, otherwise.  Note that the BDI process $I(t)$ is a Markov process.} cases, where the model parameters can change in time according to arbitrary functions, denoted $\lambda(t), \mu(t)$ and $\nu(t)$.  

In Section 1, we discuss the deterministic model, from which we derive $\oI(t)=\Ex[I(t)]$, the expected value of the infection process $I(t)$ and related stochastic processes. The function $s(t)=\int_0^t (\lambda(u)-\mu(u))\,du$ plays a central role, with or without the external arrivals.   

In Section 2, we illustrate a numerical example, by assuming a hypothetical scenario, in which a government issues a decree to its citizens to curtail their activities that may incur further infections. We show how the public's tardy response may further increase the number of infections and prolong the epidemic much longer than one might think without a quantitative analysis.

In Section 3, we seek to obtain the probability generating function for the nonhomogeneous BDI process, by solving a partial differential equation, similar to what we conducted in Part I for the homogeneous BDI process model. We find, however,  that an exact solution is obtainable only for the BD process, i.e., no arrivals of the infected from the outside. We find that the coefficient of variation for the nonhomegenous BD process to be well over unity, practically for all $t$.  This result implies that the variations among different sample paths will be as large as in the negative binomial distribution with $\nu/\lambda<1$, which we found in Part I for the homogeneous BDI model.

In the final section, we illustrate, using our running numerical example, how much information we can derive from the time-dependent PMF (probability mass function) $P_k(t)=\Pr[I(t)=k]$.  We present graphical plots of of the PMF at various $t$'s, and cross-sections of this function at various $k$'s, along the axis parallel to the $t$ axis. A mesh plot of the three dimensional array $Z(k,t)\triangleq P_k(t)$ over the $(k, t)$ plane is shown to summarize the above numerous plots.

Our analysis in the present paper reinforces our earlier claim (see \cite{kobayashi:2020b} Abstract) that it would be a futile effort to attempt to identify all possible reasons why environments of similar situations differ so much in their epidemic patterns and the number of casualties. Mere ``luck" or ``chances" play a more significant role than most of us believe. We should be prepared for a worst possible scenarios, which only a stochastic model can provide with probabilistic qualification.

An empirical validation of the above results and implications will be presented in a companion paper \cite{kobayashi:2021b}.

\end{abstract}

\paragraph{\em Keywords:}
 Birth-and-death process with immigration (BDI); Time-nonhomogeneous model; Basic and effective reproduction numbers; Nonhomogeneous  stochastic vs. deterministic models; Daily statistics; Coefficient of variation (CV); Probability generating function (PGF); Partial differential equation (PDE); Probability mass function (PMF); Cross sections;  Mesh plot.

\tableofcontents
\section{Time-Nonhomogeneous Deterministic Model }\label{sec:nonhomo-deterministic-model}

In Parts I \& II, we assumed that any of the model parameters $\lambda, \mu$ and $\nu$ do not change in time $t$. Such models are said to be time-homogeneous, or simply homogeneous. In this Part III-A, we extend our earlier results to non-homogeneous cases.   The results we obtain for the general non-homogeneous model allow us to analyze the effectiveness of a given pandemic policy of a government and its public's response. An increase in the so-called social distances and an introduction of effective vaccines will both help reduce the value of $\lambda(t)$.  Availability of medical facilities and staff will keep the value of the function $\mu(t)$ intact, while their insufficiency  will lead to a decline or drop in the $\mu(t)$ value.  The so-called ``lock-down" policy is equivalent to an attempt to let both $\lambda(t)$ and $\nu(t)$ reduce towards zero promptly.
   
We start with generalizing the model discussed in Parts I \& II and obtain the expected value of the stochastic process $I(t)$.

\subsection{Derivation of $\oI(t)$: The expected value of the Infection Process $I(t)$}\label{subsec:derivation of oI(t)}
Recall Eqn. (22) of Part I, Section 3.2, i.e.,
\begin{align}
\frac{d\oI(t)}{dt}= a \oI(t)+\nu, ~~\mbox{where}~~a=\lambda-\mu. \label{Part_I_Eq_22}
\end{align}

Let the model parameters be generalized to arbitrary functions of time $t$, i.e.,
\begin{align}
\frac{d\oI(t)}{dt}=a(t)\oI(t)+\nu(t), \label{diff-eq}
\end{align}
where
\begin{align}
a(t)&=\lambda(t)-\mu(t). \label{a(t)}
\end{align}
The function $a(t)$ largely determines the deterministic model, i.e., the expected value of the stochastic process $I(t)$. This function can be alternatively written as
\begin{align}
a(t)=({\cal R}(t)-1)/\tau(t),
\end{align}
or
\begin{align}
a(t)=\lambda\left(1-{\cal R}^{-1}(t)\right),
\end{align}
where $\tau(t)$ is the inverse of $\mu(t)$, representing the expected period that an infected person at time $t$ remains infectious until his/her recovery, removal or death:
\begin{align}
\tau(t) \triangleq \mu(t)^{-1}.
\end{align}
and ${\cal R}(t)$ is the \emph{effective reproduction number}:  
\begin{align}
{\cal R}(t)\triangleq \frac{\lambda(t)}{\mu(t)}=\lambda(t)\tau(t).
\end{align}
The value at $t=0$, ${\cal R}(0)$,  is referred to as the \emph{basic reproduction number} (cf. \cite{kobayashi:2020a}, Eqn.(6)). 

A standard technique for solving the above differential equation is to obtain its homogeneous differential equation \footnote{A homogeneous differential equation for $y$ involves only $y$ and terms involving derivatives of $y$, such as in $\frac{d^2y}{dx^2}+a(x)\frac{dy}{dx}+b(x)y=0$. If we have $c(x)$ instead of $0$ in the RHS, it is called a non-homogeneous differential equation.} first:
\begin{align}
\frac{d\oI(t)}{dt}=a(t)\oI(t), \label{homogeneous-diff-eq}
\end{align}
which readily leads to
\begin{align}
\oI(t)=\oI(0)e^{s(t)},  \label{oI(t)}
\end{align}
where
\begin{equation}
\fbox{
\begin{minipage}{7.5cm}
\[
s(t)=\int_0^t a(u)\,du=\int_0^t(\lambda(u)-\mu(u))\,du.
\]
\end{minipage}
} \label{s(t)}
\end{equation}

Then we multiply (\ref{diff-eq}) by $e^{-s(t)}$, obtaining
\begin{align}
\frac{d\oI(t)}{dt}e^{-s(t)}=a(t)e^{-s(t)}\oI(t)+\nu(t)e^{-s(t)},
\end{align}
which can be written as
\begin{align}
\frac{d(\oI(t)e^{-s(t)})}{dt}=\nu(t)e^{-s(t)}.
\end{align}
Thus, we find
\begin{align}
\oI(t)e^{-s(t)}=\int_0^t \nu(u)e^{-s(u)}\,du + C,  \label{I(t)-and-C}
\end{align}
where the integration constant $C$ can be determined by setting $t=0$ in the above, yielding $C=\oI(0)\triangleq I_0$:
\begin{equation}
\fbox{
\begin{minipage}{5cm}
\[
   \oI(t)=I_0e^{s(t)}+e^{s(t)}N(t),
   \]
\end{minipage}
} \label{I(t)-non-homogeneous}
\end{equation}
where
\begin{align}
N(t)\triangleq \int_0^t \nu(u)e^{-s(u)}\,du, ~~t\geq 0.\label{function-N(t)}
\end{align}

If the arrival rate function is homogeneous, i.e.,
\begin{align}
\nu(t)=\nu,~~\mbox{for all}~~t\geq 0,
\end{align}
Then 
\begin{align}
N(t)=\nu\Sigma(t),~~t\geq 0,
\end{align}
where
\begin{equation}
\fbox{
\begin{minipage}{5cm}
\[
   \Sigma(t)\triangleq \int_0^t e^{-s(u)}\,du, ~~t\geq 0.
   \]
\end{minipage}
} \label{function-Sigma(t)}
\end{equation}

In the \emph{homogeneous BDI process} model studied in Parts I \& II, $a(t)=a=\lambda-\mu$ and $\nu(t)=\nu$ for all $t$. Then 
\begin{align}
s(t)=at, ~~\mbox{and}~~\Sigma(t)=\frac{1-e^{-at}}{a}.
\end{align}
Thus, (\ref{I(t)-non-homogeneous}) becomes
\begin{align}
\oI(t)=I_0 e^{at} + \frac{\nu}{a}\left(e^{at}-1\right),\label{I(t)-homogeneous}
\end{align}
which we obtained in Part I.  

The above expression can be somewhat simplified, depending on the initial value $I_0$ of infected person at time $t=0$,  and whether the security control at the boundaries is perfect or not.
\begin{enumerate}

\item \textbf{When $\mathbf{\nu(t)=0}$, and $\mathbf{I_0\geq 1}$}:\\
This corresponds to Kendall's ``The generalized birth-and-death process" \cite{kendall:1948a}, and the non-homogeneous case of the \emph{Feller-Arley} (FA) process model discussed in Part II, Section 3 \cite{kobayashi:2020b}. Equation (\ref{I(t)-non-homogeneous}) reduces to
\begin{align}
\oI(t)&=I_0 e^{s(t)}, ~~t\geq 0.\label{I(t)-FA_process}
\end{align} 

\item \textbf{When $\mathbf{\nu(t)>0}$, and $\mathbf{I_0=0}$}:\\
Then (\ref{I(t)-non-homogeneous}) reduces to
\begin{align}
\oI(t)&= e^{s(t)}N(t),   \label{I(t)-N(t)}
\end{align} 
where $N(t)$ is given by (\ref{function-N(t)}).

Let us further assume that $\nu(t)=\nu_0$ and $a(t)= a_0=\lambda_0-\mu_0$ in an initial period $[0,t_0)$ for some $t_0\approx O(a_0^{-1})$.\footnote{$O(x)$ reads as a quantity in the order of $x$.  As for a concrete value of $t_0$, see the discussion in the next section.}  Then, for $t<t_0$, $s(t)\approx a_0t$  and  $N(t)\approx\nu_0\Sigma(t)$, where
\begin{align}
\Sigma(t)= \int_0^t  e^{-s(u)}\,du\approx\frac{1-e^{-a_0t}}{a_0}, ~~0\leq t<t_0.\label{initial-N(t)}
\end{align}

If $a(t)=\lambda(t)-\mu(t)$ becomes zero at some point $t^*$ such that $\lambda(t^*)=\mu(t^*)$, and if $a(t)<0$ 
for all $t>t^*$, then eventually $s(t)$ becomes negative at some point in time, and $e^{-s(u)}$ of the integrand of (\ref{function-N(t)}) grows exponentially.  However, the multiplier $e^{s(t)}$ of ({\ref{I(t)-N(t)})  decreases negative exponentially. Consequently, $ e^{s(t)}e^{-s(u)}\ll 1$ for $t>t^*$, resulting in
\begin{align}
   \oI(t)=e^{s(t)}N(t)\approx\frac{\nu_0}{a_0} e^{s(t)} \left(1-e^{-a_0t}\right),  ~~\mbox{for all} ~~t\geq 0, ~~~\mbox{if}~~I_0=0.
\label{I(t)-approximate}
\end{align}

In the homogeneous case, the above formula reduces to
\begin{align}
\oI(t)= \frac{\nu_0}{a_0}\left(e^{a_0}-1\right),~~t>0,~~\mbox{if}~~I_0=0,
\end{align}
which, interestingly enough, makes (\ref{I(t)-approximate}) an exact formula (c.f. Eqn. (23) of Part I).

\item \textbf{When $\mathbf{\nu(t)>0}$, and $\mathbf{I_0\geq 1}$}:\\
By combining the formulas (\ref{I(t)-FA_process}) and (\ref{I(t)-approximate}), we find (\ref{I(t)-non-homogeneous}) can be approximated by
\begin{align}
 \oI(t)\approx\left(I_0+\frac{\nu_0}{a_0}\right) e^{s(t)}-\frac{\nu_0}{a_0}e^{s(t)-a_0t},  ~~\mbox{for all} ~~t\geq 0.
\label{I(t)-nonhomo-approximate}
\end{align}
For sufficiently large $t$, the second term becomes negligibly small, if $s(t)\leq a_0t$ for $t\geq 0.$}

\end{enumerate}

\subsection{Derivation of $\oA(t),\oB(t)$ and $\oR(t)$}\label{subsec:derivation-of-oA-oB-and-oR}

In Part I, Section 3.2, we defined the stochastic process $A(t)$ as the cumulative count of external arrivals of 
infected individuals from the outside. Let us assume that they arrive according to a Poisson process with rate $\nu(t)$. Then, it readily follows that
\begin{align}
\oA(t)=\Ex[A(t)]=\int_0^t\nu(u)\,du.\label{oA(t)}
\end{align}
For the time-homogeneous case, $\nu(t)=\nu$ for all $t$, the above equation reduces to
\begin{align}
\oA(t)=\nu t.
\end{align}

The stochastic process $B(t)$ is defined as the cumulative account of internally infected individuals, and each infected (and infectious, as well) person will reproduce a new infection at the rate $\lambda(t)$ at time $t$.  Then $\oB(t)=\Ex[B(t)]$ should satisfy the following differential equation
\begin{align}
\frac{d\oB(t)}{dt}=\lambda(t)\oI(t), \label{diff-eq-for-B(t)}
\end{align}
hence,
\begin{align}
\oB(t)=\int_0^t \lambda(u)\oI(u)\,du.\label{oB(t)}
\end{align}

We defined the stochastic process $R(t)$ as the cumulative count of recovered/removed/dead (hence no longer infectious) persons. Each infected person will join this group at the rate of $\mu(t)$.
Thus, we have, similarly to (\ref{diff-eq-for-B(t)}),
\begin{align}
\frac{d\oR(t)}{dt}=\mu(t)\oI(t), \label{diff-eq-for-R(t)}
\end{align}
hence,
\begin{align}
\oR(t)=\int_0^t \mu(u)\oI(u)\,du.\label{oR(t)}
\end{align}

We now consider the three different situations corresponding to those discussed in the previous section:

\begin{enumerate}
\item \textbf{When $\mathbf{\nu(t)}$,  and $\mathbf{I_0\geq 1}$}:\\
In this case we have
\begin{align}
\oB(t)-\oR(t)&=I_0\int_0^t(\lambda(u)-\mu(u))e^{s(u)}\,du=I_0\int_0^t s'(u)e^{s(u)}\,du\nonumber\\
&=I_0\int_0^t\left(e^{s(u)}\right)'\,du=I_0\left[e^{s(t)}-e^{s(0)}\right]=I_0\left(e^{s(t)}-1\right),
\end{align}
from which we find
\begin{align}
\oI(t)=I_0+\oB(t)-\oR(t).
\end{align}

\item \textbf{When $\mathbf{\nu(t)>0}$,  and $\mathbf{I_0=0}$}:\\
In this case,
\begin{align}
\oB(t)-\oR(t)&=\int_0^t (\lambda(u)-\mu(u))e^{s(u)}N(u)\,du\nonumber\\
&=\int_0^t\left(e^{s(u)}\right)'N(u)\,du=\left[e^{s(u)}N(u)\right]_0^t-\int_0^te^{s(u)}N'(u)\,du\nonumber\\
&=e^{s(t)}N(t)-\int_0^t e^{s(u)}\nu(u)e^{-s(u)}\,du=\oI(t)-\oA(t),
\end{align}
from which we obtain
\begin{align}
\oI(t)=\oA(t)+\oB(t)-\oR(t),
\end{align}

\item \textbf{When $\mathbf{\nu(t)>0}$,  and $\mathbf{I_0\geq 1}$}:\\
For this general case, we have
\begin{align}
\oB(t)-\oR(t)&=\int_0^t s(u)'\left[I_0e^{s(u)}+e^{s(u)}N(u)\right]\,du\nonumber\\
&=I_0\left(e^{s(t)}-1\right)+e^{s(t)}N(t)-\oA(t),
\end{align}
from which we find
\begin{align}
I_0e^{s(t)}+e^{s(t)}N(t)=I_0+\oA(t)+\oB(t)-\oR(t).
\end{align}
Since the LHS is $\oI(t)$ as given in (\ref{I(t)-non-homogeneous}), we have 
\begin{align}
\oI(t)=I_0+\oA(t)+\oB(t)-\oR(t),
\end{align}
which could have been directly obtained  from the identity (18) of Part I, Section 3.2.

\end{enumerate}

\subsection{Daily Counts of the Infected, Recovered and Dead}
Recall that the process of our interest  $I(t)$  is the number of currently infected individuals, excluding those who have recovered, removed (to e.g., hospitals) or have died.  Since it represents the current total \emph{infectious} individuals, it contains the most important information concerning the current and future infections.  In practice, however, the statistics that are most frequently reported in mass media are:
\begin{itemize}
\item[(i)] Daily counts of newly infected persons;
\item[(ii)] Cumulative count of infected persons up to the present;
\item[(iii)] Daily counts of newly died persons;
\item[(iv)] Cumulative count of deaths up to the present;
\item[(v)] Daily counts of persons who are seriously ill and treated in hospitals, etc.  
\end{itemize}

In our companion paper \cite{kobayashi:2021b}, which report on our simulation study, we will show some of these statistics in terms of bar charts. Thus, it will be instructive to derive the expected values of some of these statistics of interest.

\begin{definition}[New Infections]

$\oI_{new}[t]$ is defined as the expected number of newly infected persons on day $t$, $t=0, 1, 2, \cdots$. 
\end{definition}
\begin{definition}[New Recoveries]

$\oR_{new}[t]$ is defined as the expected number of newly recovered persons on day $t$, $t=0, 1, 2, \cdots$.   
\end{definition}

With these definitions we state the following simple formulas as a proposition:

\begin{prop}[{Formulas for $\oR_{new}[t]$ and $\oI_{new}[t]$}]

$\oR_{new}[t]$ is given by 
\begin{align}
\oR_{new}[t]=\int_{t-1}^t \mu(u)\oI(u)\,du=\oR(t)-\oR(t-1),~~ t=1, 2, 3, \cdots.\label{oR_new[t]}
\end{align}
Similarly, $\oI_{new}[t]$ is given by
\begin{align}
\oI_{new}[t]=\int_{t-1}^t \lambda(u)\oI(u)\,du +\int_{t-1}^t\nu(u)\,du =\oB(t)-\oB(t-1)+\oA(t)-\oA(t-1),~~t=1, 2, 3,  \cdots.
\label{oI_new[t]}
\end{align}
\end{prop}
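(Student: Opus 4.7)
The plan is to read off both formulas as immediate consequences of the two differential identities already established in Section~\ref{subsec:derivation-of-oA-oB-and-oR}, together with the fundamental theorem of calculus. The substantive content of the proposition is really contained in the earlier derivations; what remains is just to interpret the ``daily count'' as the increment of a cumulative process over a unit time interval.

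First I would handle $\oR_{new}[t]$. By definition, $\oR_{new}[t]$ is the expected number of individuals who transition into the recovered/removed/dead class during the interval $(t-1,\,t]$. Since at time $u$ each currently infectious individual is removed at rate $\mu(u)$ and the expected number infectious is $\oI(u)$, the expected number of removals in an infinitesimal interval $du$ around $u$ is $\mu(u)\oI(u)\,du$. Integrating over $(t-1,\,t]$ gives the first equality in~(\ref{oR_new[t]}). For the second equality, I invoke~(\ref{oR(t)}), namely $\oR(t)=\int_0^t\mu(u)\oI(u)\,du$, and split the integral at $u=t-1$:
\begin{align*}
\int_{t-1}^t \mu(u)\oI(u)\,du = \int_0^t \mu(u)\oI(u)\,du - \int_0^{t-1}\mu(u)\oI(u)\,du = \oR(t)-\oR(t-1).
\end{align*}

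Next I would handle $\oI_{new}[t]$ by the same template applied to the two separate sources of new infections. Internal new infections during $(t-1,\,t]$ occur at rate $\lambda(u)\oI(u)$ and contribute $\int_{t-1}^t\lambda(u)\oI(u)\,du$; external new infections occur at rate $\nu(u)$ and contribute $\int_{t-1}^t\nu(u)\,du$. Adding these gives the first equality in~(\ref{oI_new[t]}). The second equality then follows by applying the same split-the-integral trick to the representations $\oB(t)=\int_0^t\lambda(u)\oI(u)\,du$ from~(\ref{oB(t)}) and $\oA(t)=\int_0^t\nu(u)\,du$ from~(\ref{oA(t)}).

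The only genuine modeling step — rather than pure calculus — is the assertion that expectations of the daily counts can be obtained by integrating the instantaneous rate functions $\mu(u)\oI(u)$, $\lambda(u)\oI(u)$, and $\nu(u)$ over the daily window. This is the place I would pause: it relies on interchanging expectation with the time integral defining the cumulative processes $R(t)$, $B(t)$, $A(t)$, which is justified because each of these processes is nondecreasing and has finite mean (so Tonelli applies), and because $A(t)$ is a Poisson process with rate $\nu(u)$ while $B(t)$ and $R(t)$ are counting processes whose intensities conditional on the history are $\lambda(u)I(u^-)$ and $\mu(u)I(u^-)$ respectively. Granting this, which is exactly the content of the earlier derivations of $\oA$, $\oB$, $\oR$, the proposition is immediate and no further computation is needed.
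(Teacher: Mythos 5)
Your proposal is correct and follows essentially the same route as the paper, whose proof simply reads both formulas off the integral representations (\ref{oA(t)}), (\ref{oB(t)}) and (\ref{oR(t)}) by taking increments over the unit interval $(t-1,t]$. Your additional remarks on why expectation commutes with the time integral make explicit a modeling step the paper leaves implicit, but they do not change the argument.
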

\begin{proof}
The above formulas are readily found from the  computation of $\oA(t), \oB(t)$ and $\oR(t)$ given in (\ref{oA(t)}), (\ref{oB(t)}) and (\ref{oR(t)}) of Section \ref{subsec:derivation-of-oA-oB-and-oR}
\end{proof}

\section{Application of the Time-Nonhomogeneous Deterministic Model}\label{sec:Example-nonhomogenous-deterministic}
 
In this example, we present an illustrative example how the above general analytic results can be utilized by considering a situation in which a government declares a state of emergency, and requests its citizens to substantially curtail their activities so as to significantly reduce the infectious rate $\lambda(t)$.  The so-called lock-down corresponds to making $\lambda(t)\approx 0$ by banning any contacts with individuals outside the household.

\subsection{Computation of $\oI(t)$}\label{subsec:compute-oI}
Consider the infection rate function $\lambda(t)$ depicted in Figure \ref{fig:lambda(t)-raised-cosine}, in which $\lambda(t)$ takes on a constant value $\lambda_0$ during the initial period $0\leq t\leq t_1(=50)$. We assume that at time $t_1$ the government issues a decree to its citizens to significantly reduce its social contacts.  Not all public may respond to the decree promptly, so we assume that it takes $d$ days to implement the order.   Between $t_1$ and $t_{1d}=t_1+d$, $\lambda(t)$ decreases monotonically, then takes on another constant value $\lambda_1$ for $t\geq t_{1d}$.  

\begin{align}
\lambda(t)&=\left\{\begin{array}{ll}
\lambda_0,~~&\mbox{for} ~~0\leq x \leq t_1,\\
\lambda_1+\frac{(\lambda_0-\lambda_1)}{2}\left(1+\cos\frac{\pi (t-t_1)}{d}\right),
~~&\mbox{for}~~t_1\leq t\leq t_{1d},\\
\lambda_1,~~&\mbox{for}~~t\geq t_{1d}.\end{array} \right.
\end{align}
By adopting a ``raised-cosine" curve between $t_1$ and $t_{1d}$, we have smooth connections at both ends: $t_1$ and $t_{1d}$, but the shape of this transitional curve is not as important as the delay $d$ value. 

As we see in the rest of this section even a small delay in implementing the government's new guideline may significantly impact the effectiveness of the decree. We consider three cases: $d=0, 5$, and $10$~[days], and the consequences of the different delays are shown in three different colors; cyan, red and blue, respectively.

\begin{figure}[thb]
\begin{minipage}[t]{0.45\textwidth}
\centering
\includegraphics[width=\textwidth]{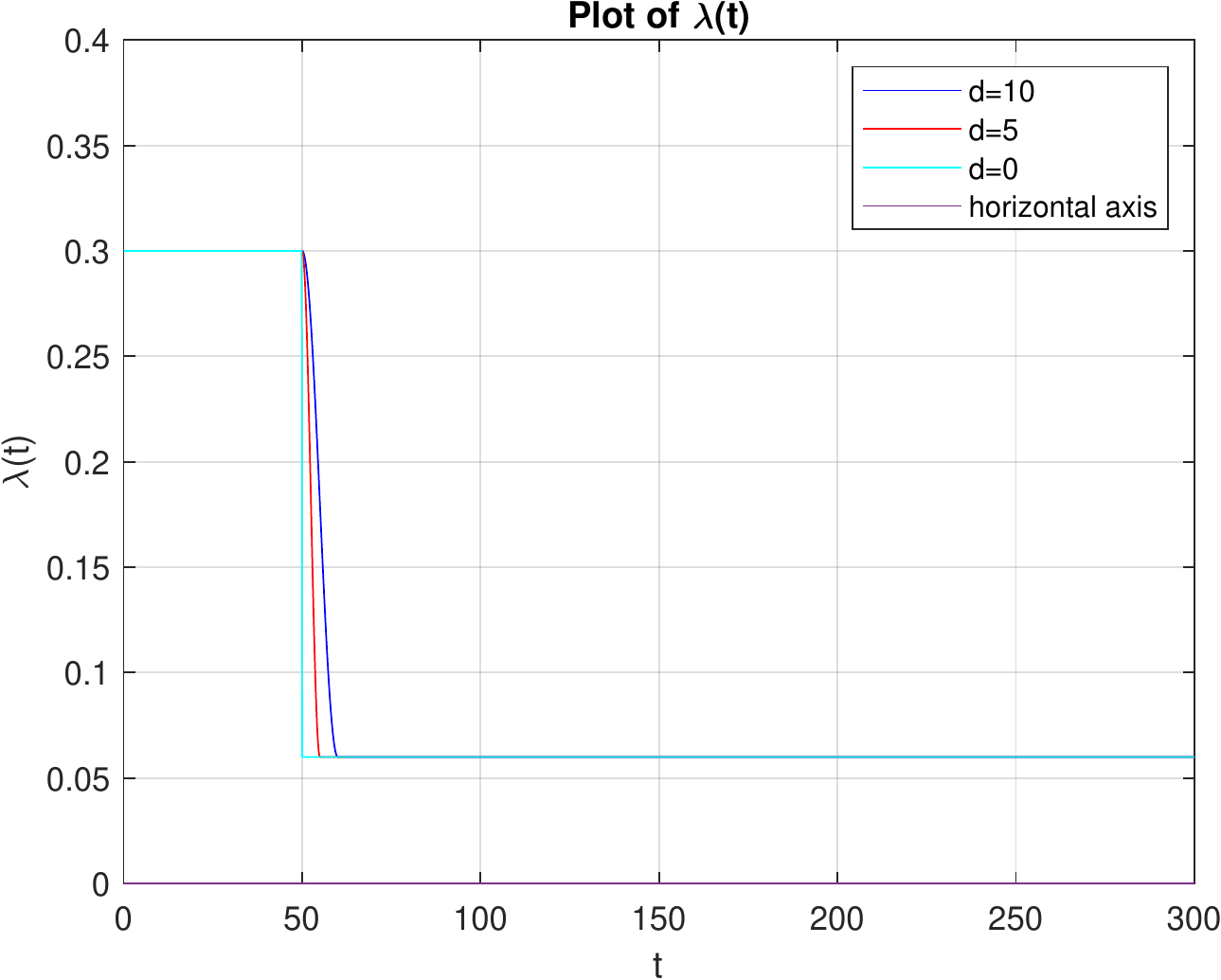}
\caption{\sf The function $\lambda(t)$ makes a transition from $\lambda_0=0.3$ down to $\lambda_1=0.06$ over the interval $t_1=50$ to $t_{1d}=50+d$ where three different delays are considered.}
\label{fig:lambda(t)-raised-cosine}
\end{minipage} 
\qquad
\begin{minipage}[t]{0.45\textwidth}
\centering
\includegraphics[width=\textwidth]{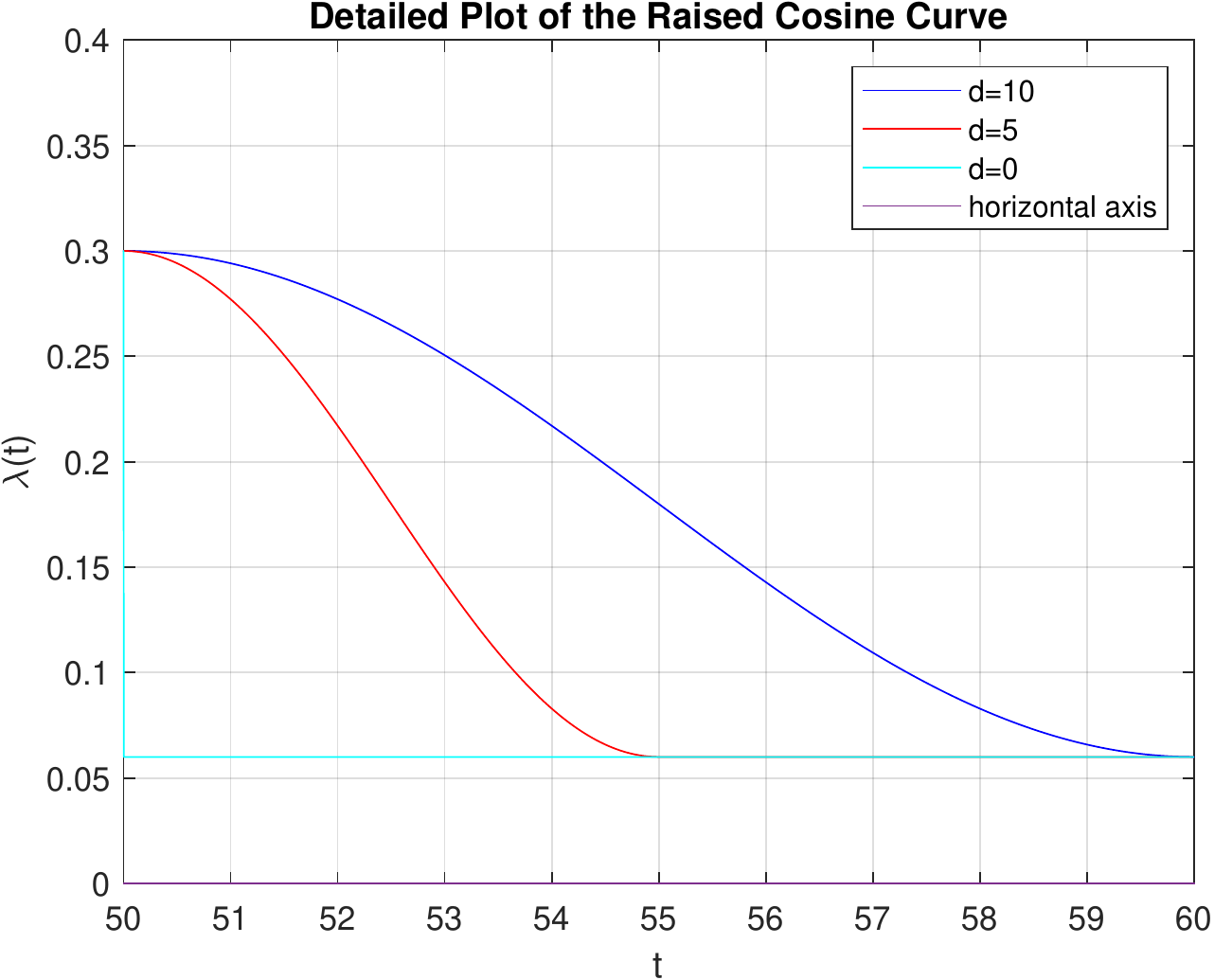}
\caption{\sf An expanded view of the transition, where the function $\lambda(t)$ takes a smooth curve represented one half cycle of a cosine function, raised up properly .} 
\label{fig:Detailed-raised-cosine-bent-at-t=50}
\end{minipage}
\end{figure}

The corresponding $s(t)$ of (\ref{s(t)}) is obtained by integrating $a(t)=\lambda(t)-\mu(t)$.  In this example,
 $\mu(t)=0.1$ for all $t$:
\begin{align}
s(t)&=\left\{\begin{array}{ll}
a_0 t,~~&\mbox{for} ~~0\leq t \leq t_1,\\
\alpha+\beta t +\gamma\sin\theta(t),~~&\mbox{for} ~~t_1\leq t \leq t_{1d},\\
s(t_{1d})+a_1(t-t_{1d}).~~&\mbox{for}~~t\geq t_{1d},\end{array} \right.\label{s(t)-example}
\end{align}
where
\begin{align}
a_0&=0.3-0.1=0.2,~~\mbox{and}~~a_1=0.06-0.1=-0.04.\label{a_0-etc}
\end{align}
and 
\begin{align}
\alpha &=\frac{(a_0-a_1)t_1}{2},~~\beta=\frac{(a_0+a_1)}{2},~~\gamma =\frac{(a_0-a_1)d}{2\pi},~~\mbox{and}~~
\theta(t)=\frac{\pi(t-t_1)}{d}. \label{alpha-etc}
\end{align}
In Figure \ref{fig:s(t)-bent-at-t=50} we plot the above $s(t)$. The fact that $s(t)$  does not become negative until around $t=300$ implies that the epidemic does not begin to die down completely until that period.

We discuss the behavior $\oI(t)$ for the three different cases defined in the previous sections.

\begin{enumerate}

\item \textbf{When $\mathbf{\nu(t)=0}$, and $\mathbf{I_0\geq 1}$}:\\

When $\nu(t)=0$, our model reduces to the BD (birth-and-death) process \footnote{The BD process is often referred to as  \emph{Feller-Arley} (FA) process.} model. The function $s(t)$ defined by (\ref{s(t)}) is plotted in Figure \ref{fig:s(t)-bent-at-t=50} and  $\oI(t)$ of (\ref{I(t)-FA_process}), with $I_0=1$,  is shown in Figure \ref{fig:I(t)-bent-at-t=50-and-I_0=1}.
\begin{figure}[thb]
\begin{minipage}[t]{0.45\textwidth}
\centering
\includegraphics[width=\textwidth]{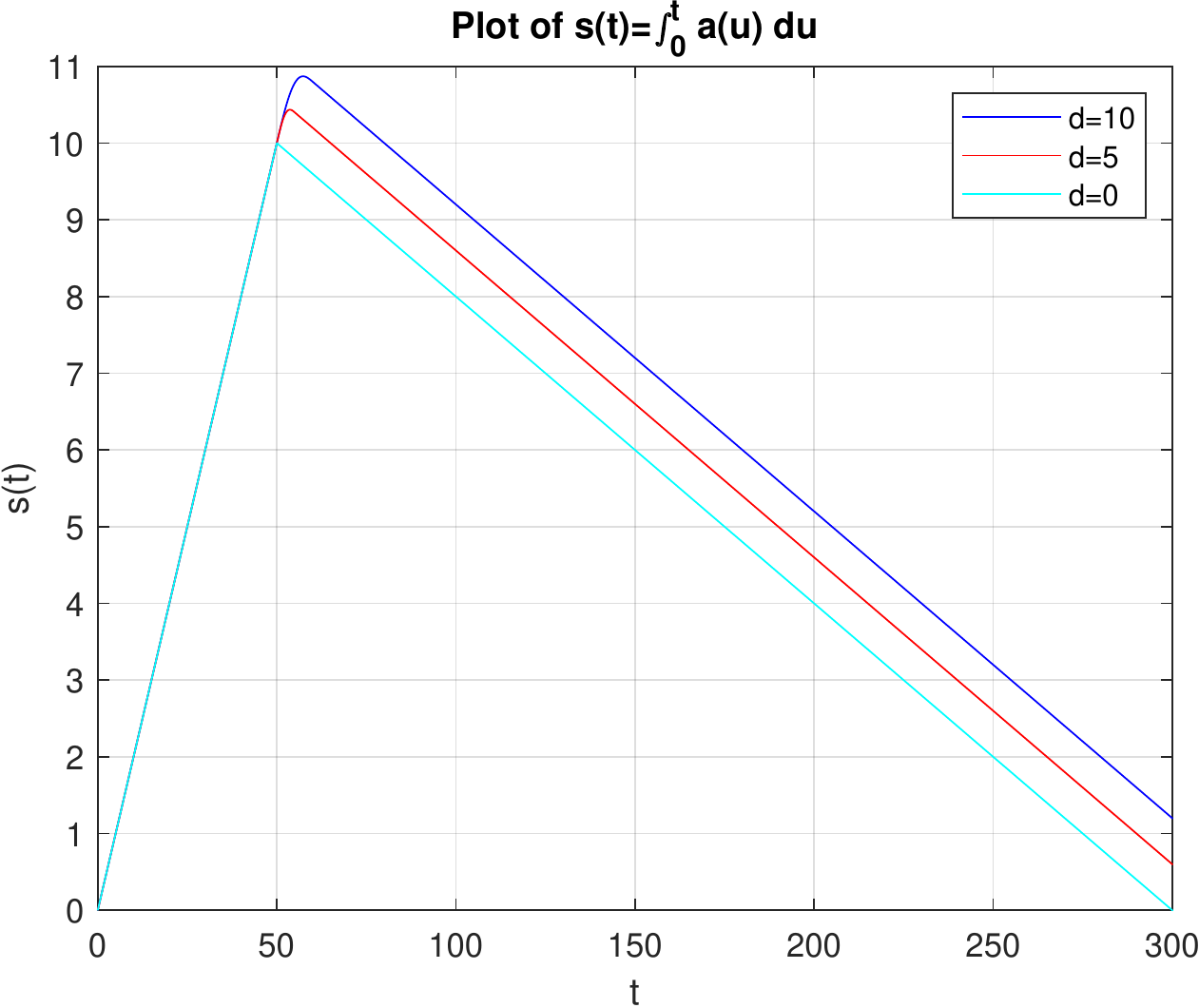}
\caption{\sf The function $s(t)=\int_0^t (\lambda(u)-\mu(t))du$: where $\lambda(t)$ is given in Figure \ref{fig:lambda(t)-raised-cosine} and $\mu(t)=0.1$ for all $t$.} 
\label{fig:s(t)-bent-at-t=50}
\end{minipage}
\qquad
\begin{minipage}[t]{0.45\textwidth}
\centering
\includegraphics[width=\textwidth]{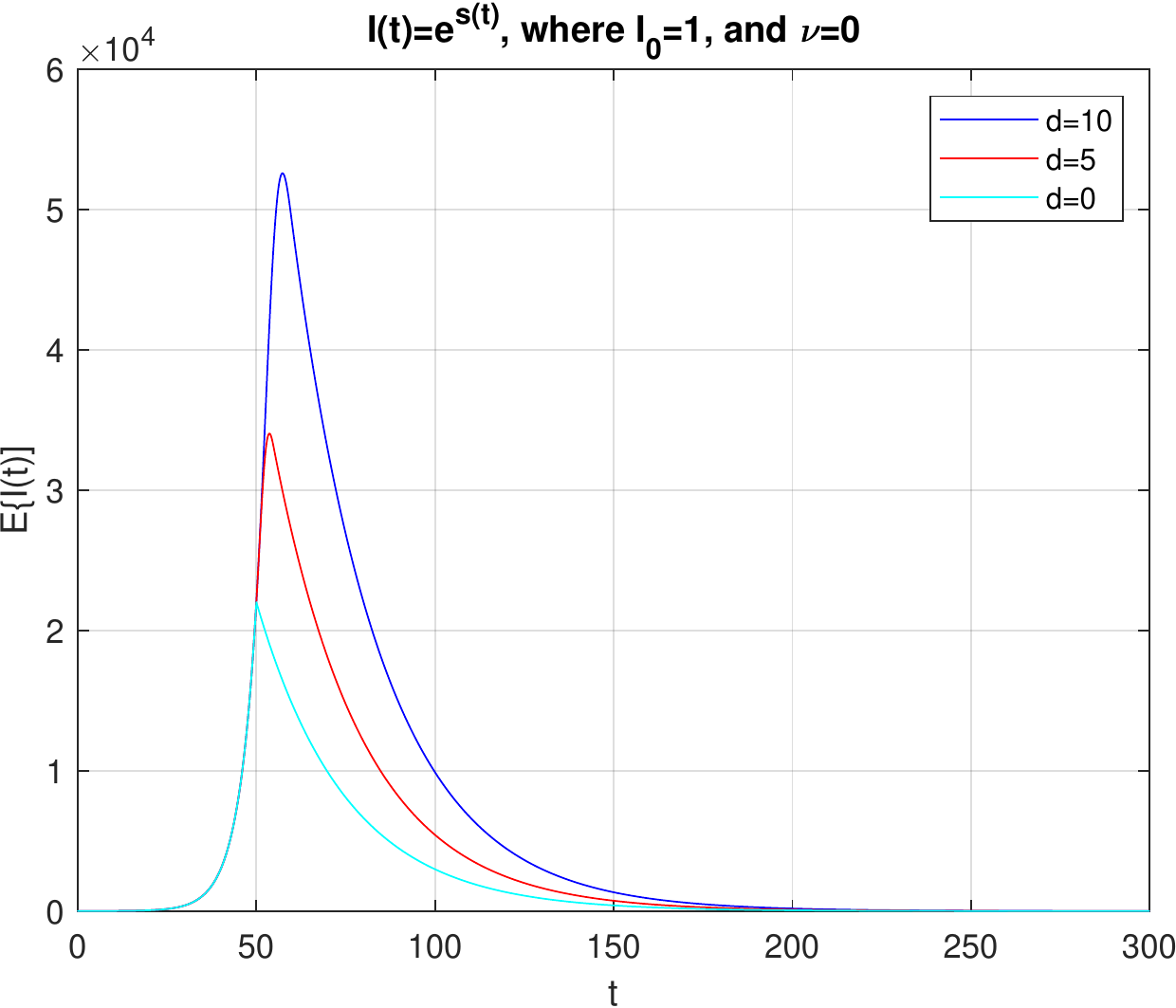}
\caption{\sf The function $\oI(t)=I_0 e^{s(t)}$, with $\nu(t)=0$ and $I_0=1$: where $s(t)$ is given in Figure \ref{fig:s(t)-bent-at-t=50}.} 
\label{fig:I(t)-bent-at-t=50-and-I_0=1}
\end{minipage} 
\end{figure}

\begin{figure}[thb]
\begin{minipage}[t]{0.45\textwidth}
\centering
\includegraphics[width=\textwidth]{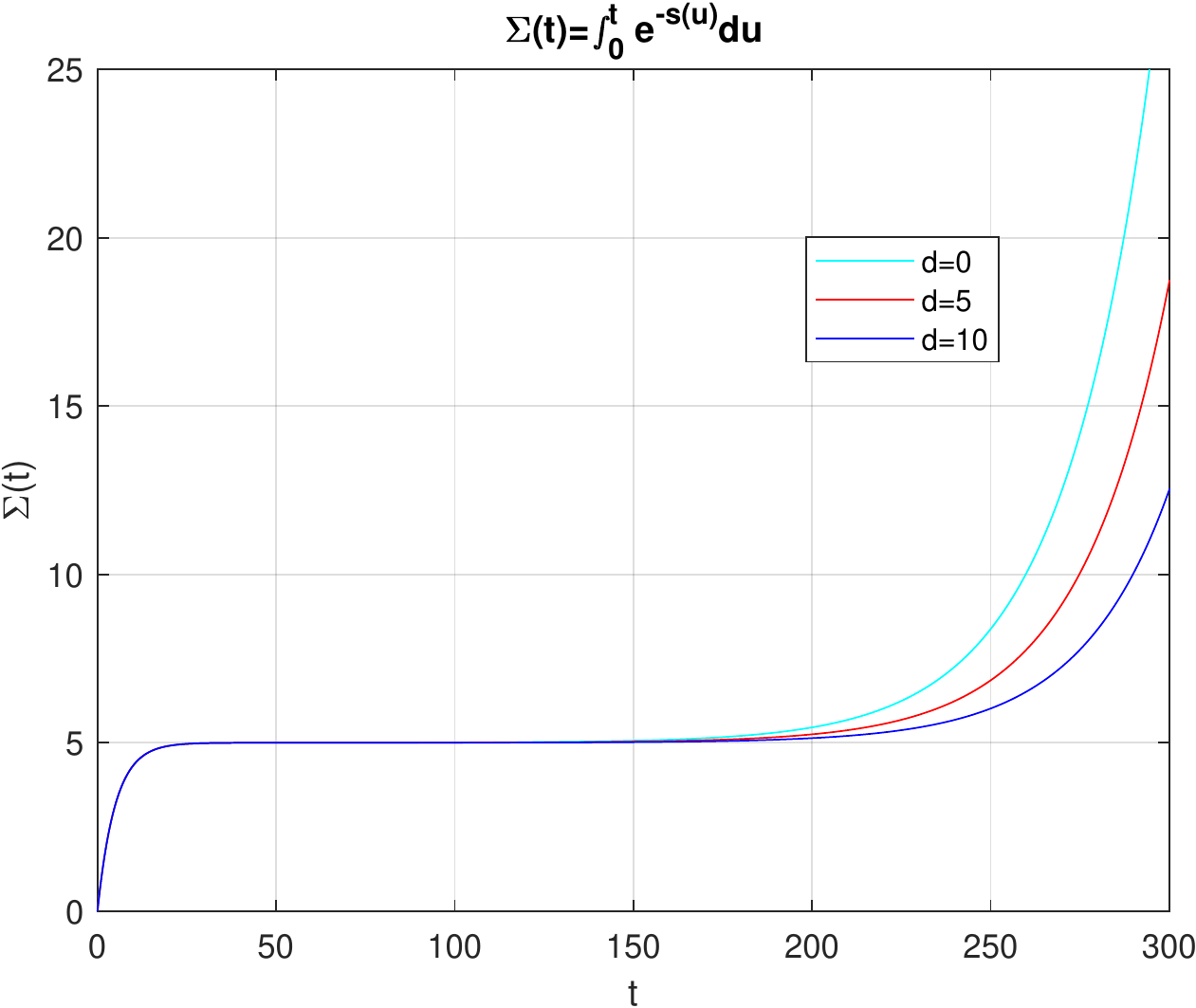}
\caption{\sf Plot of $\Sigma(t)=\int_0^t e^{-s(u)}\,du$ of (\ref{function-Sigma(t)}).} 
\label{fig:Sigma(t)}
\end{minipage}
\qquad
\begin{minipage}[t]{0.45\textwidth}
\centering
\includegraphics[width=\textwidth]{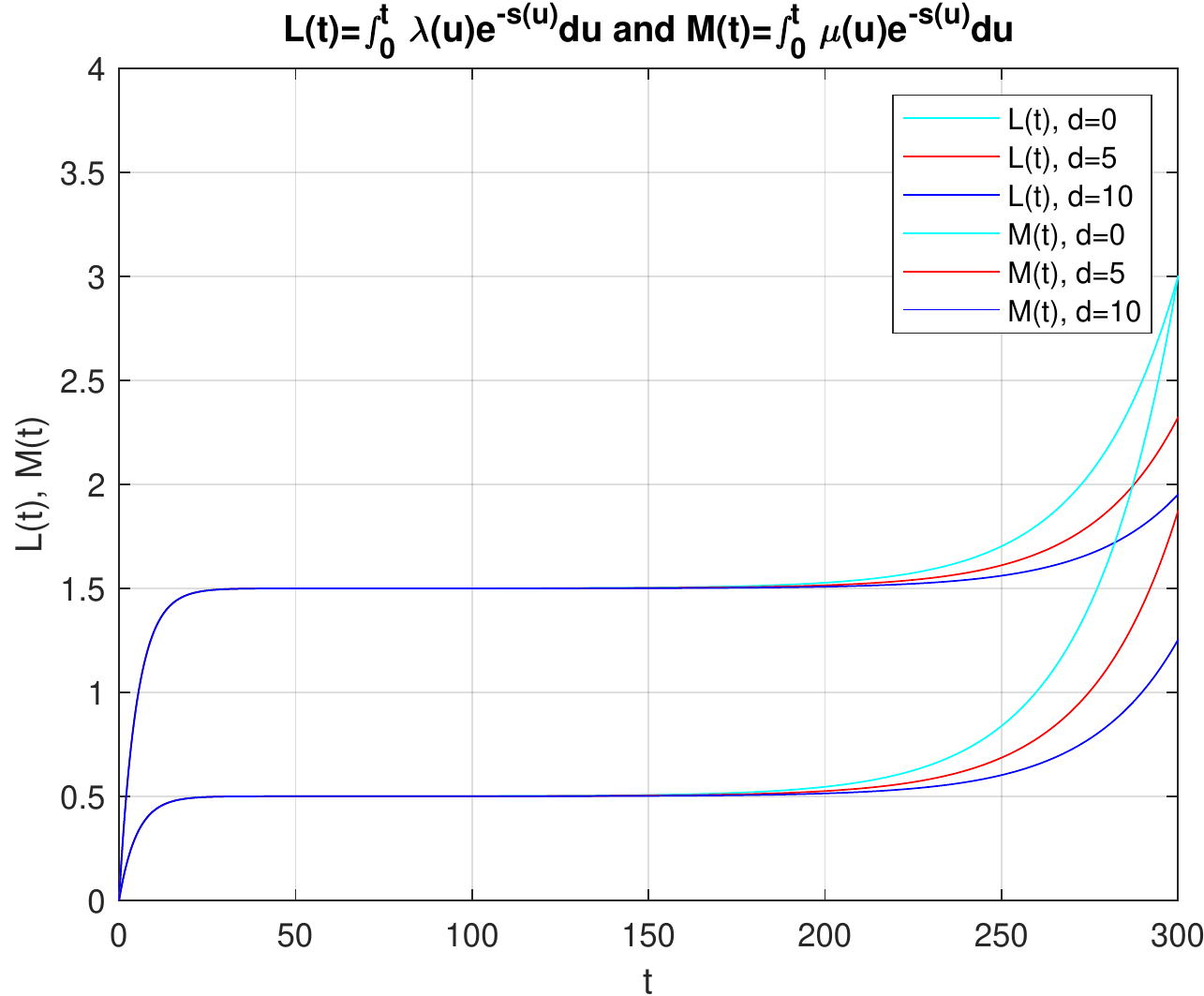}
\caption{\sf Plot of $L(t)$ and $M(t)$. }
\label{fig:L(t)-and-M(t)-bent-at-t=50}
\end{minipage} 
\end{figure}
\begin{figure}
\begin{minipage}[t]{0.45\textwidth}
\centering
\includegraphics[width=\textwidth]{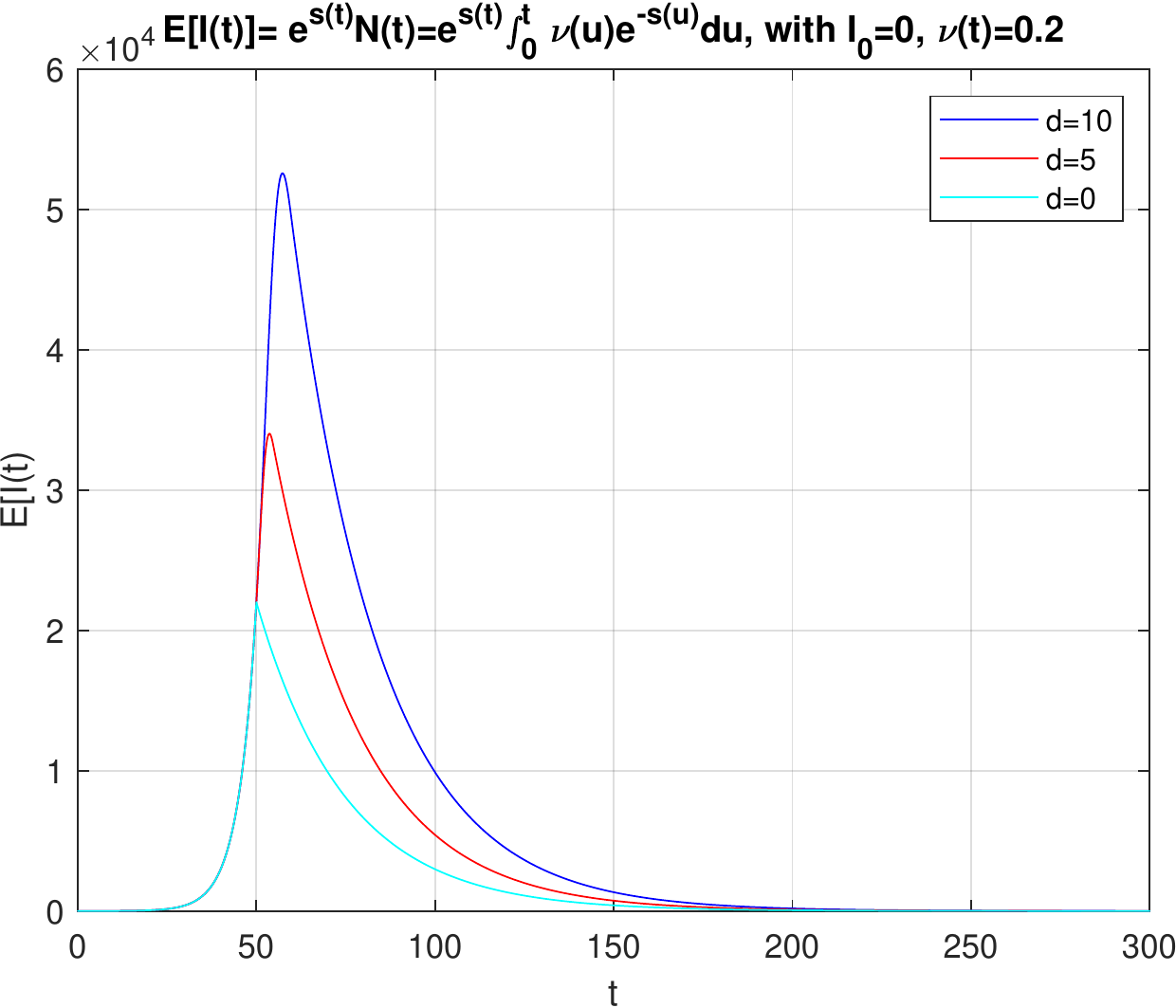}
\caption{\sf Plot of $\oI(t)$ of (\ref{I(t)-approximate}), i.e.,$\nu>0$ and $I_0=0$,}
\label{fig:I(t)-using-N(t)-bent-at-t=50}
\end{minipage}
\qquad
\begin{minipage}[t]{0.45\textwidth}
\centering
\includegraphics[width=\textwidth]{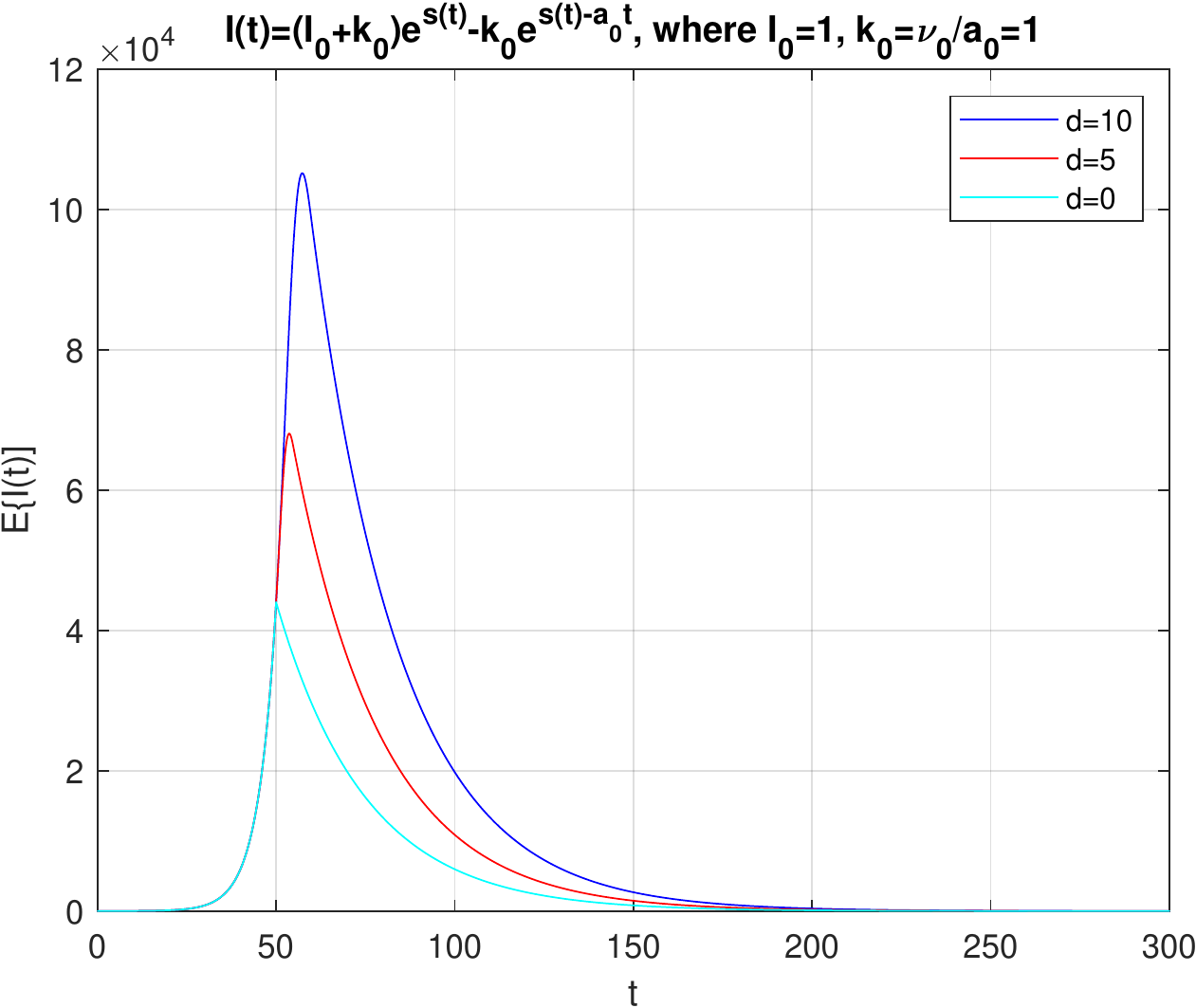}
\caption{\sf Plot of $\oI(t)$ of (\ref{eq:I(t)-k_0=1-I_0=1}), with $\nu(t)=\nu_0$ and $I_0=1$.}
\label{fig:I(t)-k_0=1-I_0=1-bent-at-t=50}
\end{minipage}
\end{figure}

By comparing the three curves in Figure \ref{fig:I(t)-bent-at-t=50-and-I_0=1}, we note  
\begin{itemize}
\item If the public immediately respond to the government request by reducing the infection rate  $\lambda(t)$ below
$\mu(t)$ immediately (i.e., $d=0$),  the $\oI(t)$ immediately begins to decrease, as shown  by the curve in cyan. 
\item If the public takes some time, however, $\oI(t)$ continues to increase until $a(t)$ turns negative (i.e., $\lambda(t) <\mu_0=0.1$ in Figure \ref{fig:lambda(t)-raised-cosine}), when $s(t)$ starts decreasing.
\end{itemize}

\item \textbf{When $\mathbf{\nu(t)>0}$, and $\mathbf{I_0=0}$}:\\ 

Recall that $\oI(t)$ is given by (\ref{I(t)-N(t)}), and $\Sigma(t)$ defined by (\ref{function-Sigma(t)}) is plotted in Figure \ref{fig:Sigma(t)}.  In the approximation (\ref{initial-N(t)}) we somewhat vaguely defined $t_0=O(a_0^{-1})$. Let us note, for example, that $\exp(-5)\approx 0.0067$. Then the approximation (\ref{I(t)-approximate}) introduces an error of less than 0.67\% if we set $a_0t_0>5$, i.e., $t_0>5/a_0=25$.  

Noting that the function $\Sigma(t)$ reaches its plateau level $\frac{1}{a_0}$ by $t_0 (\approx 25)$, and remains flat until $t$ further increases to $t^+\approx 200$ as seen in Figure \ref{fig:Sigma(t)}.  This is because in the interval $(t_0, t^+)$,  we find that $s(t)> 5$, which makes $e^{-s(t)}<0.0067$. Thus, the contribution of the integrand $e^{-s(u)}, u\in(t_0, t^+)$ to $\Sigma(t)$ is negligibly small.  Thus, we find that the approximation (\ref{initial-N(t)}) is valid in $[0, t^+)$ for any $\lambda(t), \mu(t)$, and $\nu(t)$, so long as these functions remain constants during the initial period $[0, t_0)$.

For $u>t^+$, the function $e^{-s(u)}$ grows exponentially, as $s(u)$ continues decreasing, and eventually negative for $u>300$, as seen in Figure \ref{fig:s(t)-bent-at-t=50}. In this region, however, the function $e^{s(t)}$ rapidly  decays towards zero.  Consequently,  we obtained the approximation (\ref{I(t)-approximate})

In Figure \ref{fig:I(t)-using-N(t)-bent-at-t=50} we plot $\oI(t)$ of (\ref{I(t)-approximate}). Because of the behavior of $N(t)$ discussed earlier, the shape of this $\oI(t)$ is indistinguishable from that of Figure \ref{fig:I(t)-bent-at-t=50-and-I_0=1}; the functional form $\exp(s(t))$ essentially
determines the shape of the $\oI(t)$'s in both figures.  Their magnitudes happen to be the same, because $I_0$ and $\nu_1/a_1$ are both equal to unity in this particular example.  Needless to say, if we set $\nu_0=0.1$, for instance the $\oI(t)$ of Figure \ref{fig:I(t)-using-N(t)-bent-at-t=50} will be scaled down to one half.

\item \textbf{When $\mathbf{\nu(t)>0}$ and $\mathbf{I_0\geq 1}$}:\\

For the numerical value of the running example, with $I_0=1$ and $k_0\triangleq\frac{\nu_0}{a_0}=\frac{0.2}{0.3-0.1}=1$, we have
\begin{align}
\oI(t)\approx (I_0+k_0)e^{s(t)} -e^{s(t)-a_0t}=2e^{s(t)} -e^{s(t)-0.3t},\label{eq:I(t)-k_0=1-I_0=1}
\end{align}
where
\begin{align}
s(t)=\int_0^t (\lambda(u)-0.1)\,du.
\end{align}
The function $s(t)$ is given by (\ref{s(t)-example}) with numerical values of (\ref{a_0-etc}) and (\ref{alpha-etc}).
Figure \ref{fig:I(t)-k_0=1-I_0=1-bent-at-t=50} is a plot of $\oI(t)$ of (\ref{eq:I(t)-k_0=1-I_0=1}), and is clear that it is a sum of $\oI(t)$ plotted in Figures \ref{fig:I(t)-bent-at-t=50-and-I_0=1} and \ref{fig:I(t)-using-N(t)-bent-at-t=50}.  In this example $s(t)\leq a_0t$ for all $t\geq 0$, thus $e^{s(t)-a_ot}\leq 1$ for $t\geq 0$.\footnote{This inequality must hold under all practical situation,  where an effort to decrease $a(t)$ from the original value $a_0$ is in effect.} Thus, $\oI(t)$ of (\ref{eq:I(t)-k_0=1-I_0=1}) is, for all practical purposes, just  $(1+\frac{k_o}{I_0})=2$ times of $\oI(t)$ given by (\ref{I(t)-FA_process}), as shown in Figures \ref{fig:I(t)-bent-at-t=50-and-I_0=1} and \ref{fig:I(t)-k_0=1-I_0=1-bent-at-t=50}.

\end{enumerate}

\subsection{Computation of $\oA(t), \oB(t)$ and $\oR(t)$}\label{subsec:example-compute-oA-oB-oR}

For $\nu(t)=\nu_0$, $\oA(t)$ is simply given by
\begin{align}
\oA(t)=\nu_0 t=0.2t, ~~\mbox{for all}~~t.
\end{align}
Figures \ref{fig:B(t)} \& \ref{fig:R(t)} show $\oB(t)$ and $\oR(t)$ given by (\ref{oB(t)}) and (\ref{oR(t)}), respectively.   

Given  $\oA(t),\oB(t), \oR(t)$ obtained above and the initial condition $I_0=0$, we compute $\oI_c(t)$  
\begin{align}
\oI_c(t)=\oA(t)+\oB(t)-\oR(t) \label{I_c(t)},
\end{align}
to check the consistency among the the three stochastic processes.  The $\oI_c(t)$ computed above  should agree to $\oI(t)$ originally computed by (\ref{I(t)-approximate}), which indeed can be numerically verified.

\begin{figure}[thb]
\begin{minipage}[t]{0.45\textwidth}
\centering
\includegraphics[width=\textwidth]{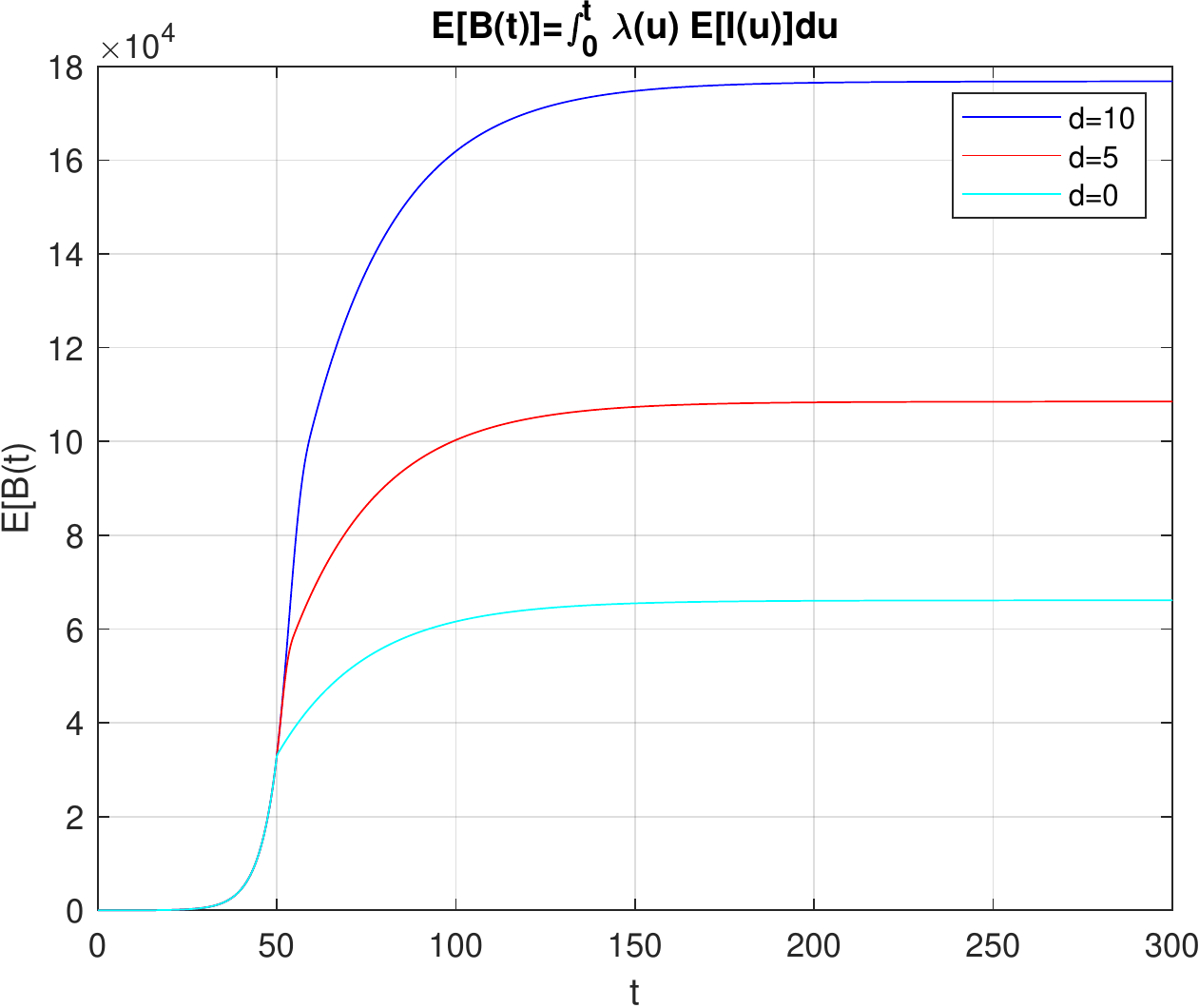}
\caption{\sf $\oB(t)$, the expected cumulative number of infections.}
\label{fig:B(t)}
\end{minipage}
\hspace{0.5cm}
\begin{minipage}[t]{0.45\textwidth}
\centering
\includegraphics[width=\textwidth]{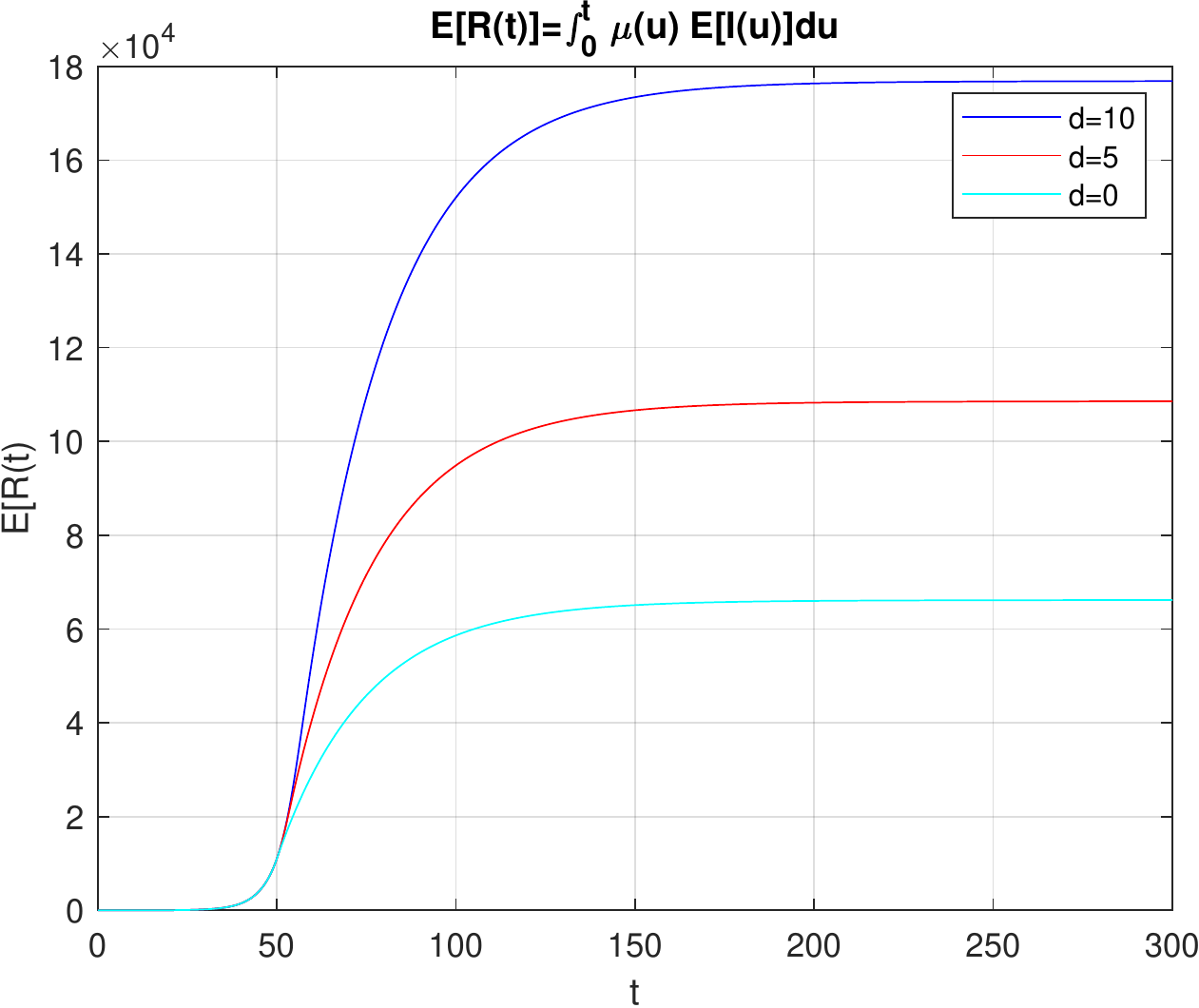} 
\caption{\sf $\oR(t)$, the expected cumulative number of recoveries. }
\label{fig:R(t)}
\end{minipage} 
\end{figure}

\clearpage
\subsection{Computation of Daily Counts of the Infected, Recovered and Dead}
\label{subsec: Comp-new-infections}

From the formulas (\ref{oR_new[t]}) and (\ref{oI_new[t]}), we can readily obtain the new infections and new recoveries, as shown in Figures \ref{fig:Example-new-infections-per-day} and \ref{fig:Example-new-recoveries-per-day}.

Note that the shape of $\oR_{new}[t]$ is proportional to $\oI(t)$, since $\mu(t)$ is constant in this running example, whereas $\lambda(t)$ changes value from $\lambda_0$ to $\lambda_1$ during the interval $t\in [50, 50+d1)$, where $d=0$ (cyan), $d=5$ (red) and $d=10$ (blue). 

\begin{figure}[hbt]
\begin{minipage}[h]{0.45\textwidth}
\centering
\includegraphics[width=\textwidth]{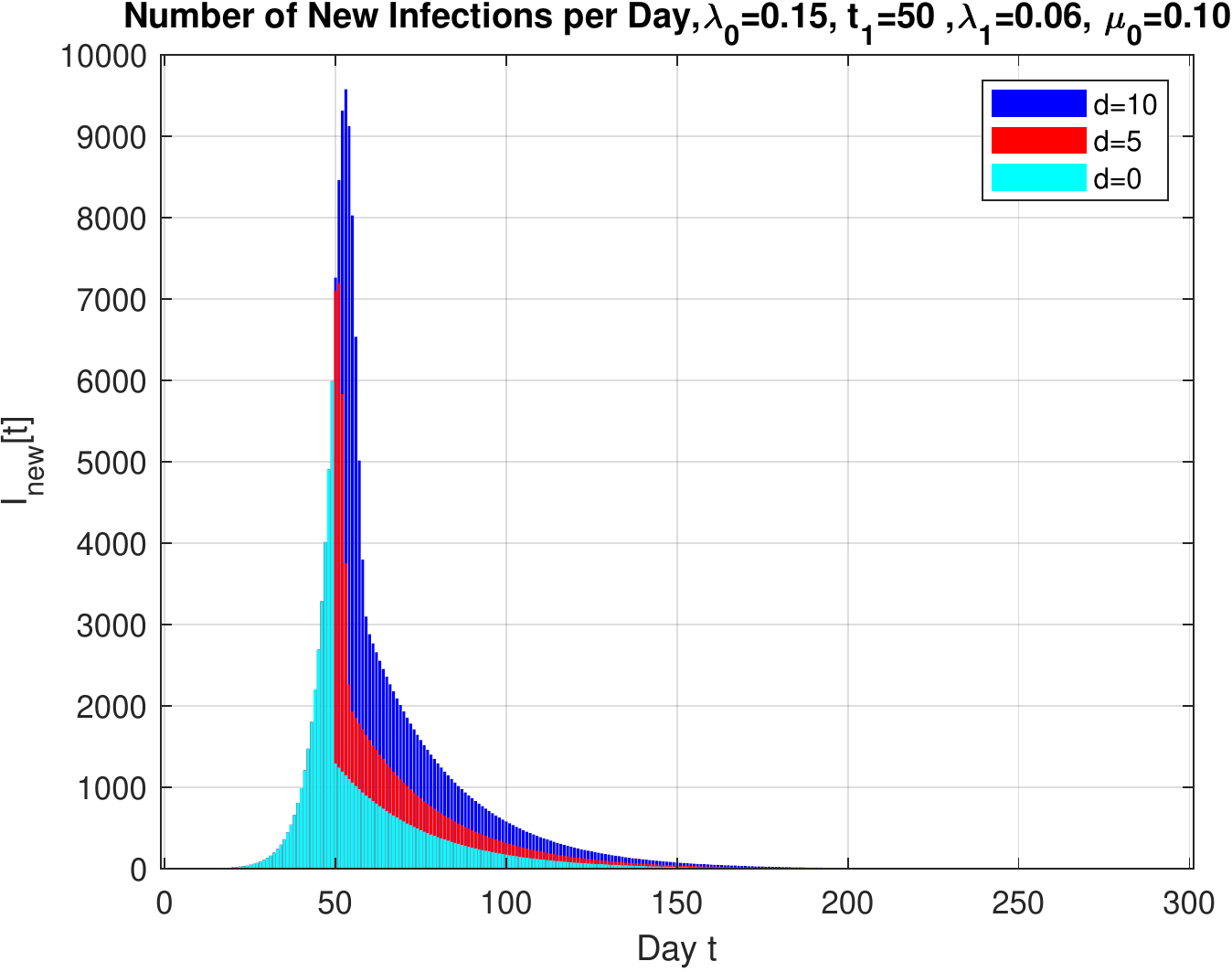}
\caption{\sf New Daily Infections $\oI_{new}[t], t=0, 1, 2, \ldots.$}
\label{fig:Example-new-infections-per-day}
\end{minipage}
\hspace{0.5cm}
\begin{minipage}[h]{0.45\textwidth}
\centering
\includegraphics[width=\textwidth]{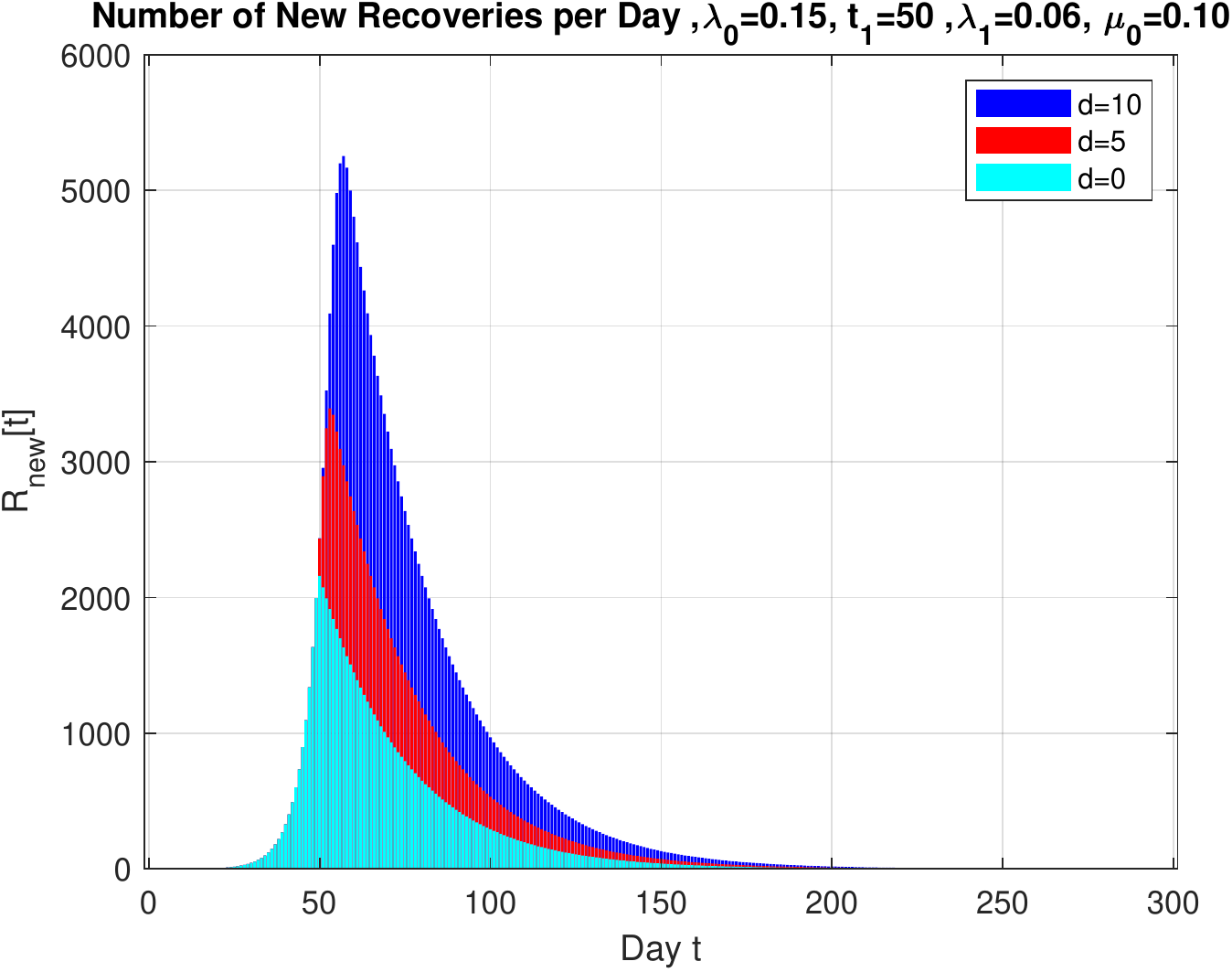} 
\caption{\sf New Daily Recoveries $\oR_{new}[t], t=0, 1, 2, \ldots.$ }
\label{fig:Example-new-recoveries-per-day}
\end{minipage} 
\end{figure}

\section{Time-Nonhomogeneous Stochastic Model}\label{sec:nonhomo-stochastoc-model}
\subsection{Probability Generating Function, Probability Mass Function, and Moments of $I(t)$}\label{subsec:find-PGF-PMF-moments}
The partial differential equation (PDE) that we defined in Part I, Section 3.1, Eqn.(15) can be generalized to the nonhomogeneous case as 
\begin{align}
\frac{\partial G(z,t)}{\partial t}
=(z-1)\left[(\lambda(t) z -\mu(t))\frac{\partial G(z,t)}{\partial z}+\nu(t) G(z,t)\right],\label{PDE-non-homo}
\end{align}
with the boundary condition
\begin{align}
G(z,0)=\Ex[z^{I(0)}]=z^{I_0}. \label{boundary-cond}
\end{align}  
Lagrange's method to solve (\ref{PDE-non-homo}) leads to the following auxiliary differential equations (see Part I, Appendix A, Eqn.(A.7)):
\begin{align}
\frac{dt}{1}= - \frac{dz}{(\lambda(t)z-\mu(t))(z-1)}=\frac{dG(z,t)}{\nu(t)(z-1)G(z,t)}. \label{auxiliary-eqn}
\end{align}
Unfortunately, the solution form given in (A.8) of Part I does not extend to the nonhomogeneous case. From the left and middle terms of (\ref{auxiliary-eqn}), we obtain
\begin{align}
\frac{dz}{dt}=(z-1)(\mu(t)-\lambda(t)z). \label{1st-aux-eqn}
\end{align}

If we change the variable $z$ to $x$, as Kendall \cite{kendall:1948a} suggests:
\begin{align}
x=\frac{1}{z-1},~~\mbox{or}~~z=1+\frac{1}{x},  
\end{align}
we find
\begin{align}
\frac{dz}{dt}=\frac{dz}{dx}\frac{dx}{dt}=-\frac{1}{x^2}\frac{dx}{dt},
\end{align}
which transforms (\ref{1st-aux-eqn}) into
\begin{align}
\frac{dx}{dt}=a(t)x+\lambda(t), ~~\mbox{where}~~a(t)=\lambda(t)-\mu(t),
\end{align}
which is an ordinary differential equation, similar to (\ref{diff-eq}). Thus, we have, similarly to (\ref{I(t)-and-C})
\begin{align}
xe^{-s(t)}= L(t) + C_1 \label{x-C_1}
\end{align}
where $s(t)$ is defined in (\ref{oI(t)}), and $L(t)$ is defined, similarly to $N(t)$ of (\ref{function-N(t)}), by
\begin{align}
L(t)\triangleq\int_0^t \lambda(u)e^{-s(u)}\,du. \label{function-L(t)}
\end{align}
 Then we have,
\begin{align}
\frac{e^{-s(t)}}{z-1}-L(t)=C_1.  \label{1st-solution}
\end{align}

Let us define the function $M(t)$ similar to the $L(t)$ above and $N(t)$ of (\ref{function-N(t)}):
\begin{align}
M(t)\triangleq\int_0^t \mu(u)e^{-s(u)}\,du.  \label{function-M(t)}
\end{align}
Then we can derive the following identity equation:
\begin{align}
L(t)-M(t)&=\int_0^t(\lambda(u)-\mu(u))e^{-s(u)}\,du=\int_0^t a(t)e^{-s(u)}\,du=\int_0^t s'(t)e^{-su)}\,du\nonumber\\
&= -\left[e^{-s(u)}\right]_{u=0}^t=1-e^{-s(t)}.\label{L-M-identity}
\end{align}

In order to find a second independent solution of (\ref{auxiliary-eqn}), 
we need to solve the differential equation
\begin{align}
\frac{d\log G(z,t)}{dz}=-\frac{\nu(t)}{\lambda(t) z-\mu(t)}, \label{2nd-auxiliary}
\end{align}
or alternatively
\begin{align}
\frac{d\log G(z,t)}{dt}=\nu(t)(z-1).\label{2nd-auxilary-2}
\end{align}
Unfortunately, neither of these equations seem unsolvable, unless $\nu(t)=0$.  Thus, we have to be content with the nonhomogeneous BD (birth-and-death) process, discussed by Kendall \cite{kendall:1948a}, and Bailey \cite{bailey:1964}.  

\subsection{Nonhomogeneous Birth-and-Death Process}\label{subsec:Nonhomo-BD-process}
Thus, we continue our analysis by assuming no external arrivals.  By setting the RHS of (\ref{2nd-auxiliary}) (hence  (\ref{2nd-auxilary-2}) as well) equal to zero, we find that the second solution to the  PDE (\ref{auxiliary-eqn}) is simply given by
\begin{align}
G(z,t)=C_2.
\end{align}
We write the functional relation between $C_1$ and $C_2$ as
\begin{align}
C_2=f(C_1),  \label{2nd-solution}
\end{align}
which, together with (\ref{1st-solution}), implies
\begin{align}
G(z,t)=f\left(\frac{e^{-s(t)}}{z-1}-L(t)\right).\label{G-and-f}
\end{align}

The boundary condition (\ref{boundary-cond}) gives
\begin{align}
z^{I_0}=f\left(\frac{1}{z-1}\right).\label{boundary-cond-z^I_0}
\end{align}
By setting
\begin{align}
\frac{1}{z-1}\triangleq y, ~~\mbox{or}~~z=1+\frac{1}{y},\label{y-and-z}
\end{align}
we find  the functional form $f(\cdot)$ , using (\ref{boundary-cond}), as
\begin{align}
f(y)=\left(1+\frac{1}{y}\right)^{I_0}.\label{function-f}
\end{align}
By combining  (\ref{G-and-f}) and (\ref{function-f}), we obtain 
\begin{align}
G(z,t)&=\left(1+ \frac{1}{\frac{e^{-s(t)}}{z-1}-L(t)}\right)^{I_0}.
\label{PGF_nonhomogeneous}
\end{align}
By substituting the following relation, which is from the identity equation (\ref{L-M-identity})
\begin{align}
e^{-s(t)}=1-L(t)+M(t), \label{E=1-L+M}
\end{align}
we find the that PGF of (\ref{PGF_nonhomogeneous}) can be written as
\begin{align}
G(z,t)&=\left(\frac{M(t)-(L(t)-1)z}{1+M(t)-L(t)z}\right)^{I_0},\label{PGF-M-L}
\end{align}
which we now express as
\begin{align}
G(z,t)=A(z,t)B(z,t),\label{G=AB}
\end{align}
where 
\begin{align}
A(z,t)&\triangleq (M(t)+(1-L(t))z)^{I_0}=\sum_{i=0}^{I_0}{I_0\choose i}M(t)^{I_0-i}(1-L(t))^iz^i,\label{A(t)}\\
B(z,t)&\triangleq (1+M(t)-L(t)z)^{-I_0}=(1+M(t))^{-I_0}\left(1-\frac{L(t)z}{M(t)+1}\right)^{-I_0}\nonumber\\
&=(1+M(t))^{-I_0}\sum_{j=0}^\infty {I_0+j-1\choose j}\left(\frac{L(t)}{M(t)+1}\right)^jz^j. \label{B(t)}
\end{align}
with the latter being obtained from the binomial theorem for negative integer exponents\footnote{
$(1-x)^{-n}=\sum_{j=0}^\infty {n+j-1\choose j}x^j, ~~|x|<1.$}

By summing all coefficients of the terms $z^k$ such that $k=i+j$ in (\ref{G=AB}), we find the expression for the probability mass functions $P_k(t)$:
\begin{align}
P_0(t)&=\left(\frac{M(t)}{1+M(t)}\right)^{I_0}\nonumber\\
P_k(t)&=\sum_{i=0}^{\min\{I_0,k\}}{I_0\choose i}{I_0+k-i-1\choose k-i}
\left(\frac{M(t)}{1+M(t)}\right)^{I_0-i}\left(\frac{L(t)}{1+M(t)}\right)^{k-i}\left(\frac{1-L(t)}{1+M(t)}\right)^i,~~\mbox{for}~~k\geq 1.  \label{P_k(t)-L-M}
\end{align}
By defining
\begin{align}
\alpha(t)\triangleq\frac{M(t)}{1+M(t)},~~\mbox{and}~~\beta(t)\triangleq\frac{L(t)}{1+M(t)}, \label{functions-alpha-beta}
\end{align}
we find 
\begin{align}
G(z,t)&=\left(\frac{\alpha(t)+(1-\alpha(t)-\beta(t))z}{1-\beta(t)z}\right)^{I_0},\label{G-alpha-beta}
\end{align}
which is a nonhomogeneous counterpart of the similar expression (see e.g.,  Takagi (see \cite{takagi:2007}, p. 92, also \cite{kobayashi:2020s,kobayashi:2020a})

The expression (\ref{P_k(t)-L-M}) can be written as 
\begin{align}
P_0(t)&=\alpha(t)^{I_0}\nonumber\\
P_k(t)&=\sum_{i=0}^{\min\{I_0,k\}}{I_0\choose i}{I_0+k-i-1\choose k-i}
\alpha(t)^{I_0-i}\beta(t)^{k-i}(1-\alpha(t)-\beta(t))^i,~~\mbox{for}~~k\geq 1.  \label{P_k(t)-alpha-beta}
\end{align}

When $I_0=1$, which is of our interest, the above formula can be simplified  to
\begin{align}
P_0(t)&=\alpha(t)\nonumber\\
P_k(t)&=(1-\alpha(t))(1-\beta(t)) \beta(t)^{k-1}.\label{P_k(t)-alpha-beta-I_0=1}
\end{align}
Using this, we can find an alternative way to compute the $\oI(t)$, viz.
\begin{align}
\oI(t)&=E[I(t)]=(1-\alpha(t))(1-\beta(t))\sum_{k=1}^\infty k\beta^{k-1}\nonumber\\
&=(1-\alpha(t))(1-\beta(t))\frac{d}{d\beta(t)}\left(\frac{1}{1-\beta(t)}\right)=\frac{1-\alpha(t)}{1-\beta(t)}.
\label{E[I(t)]-alpha-beta}
\end{align}

As a special case, let us consider the \emph{time-homogeneous} case, for which we have $s(t)=at=(\lambda-\mu)t$, and
\begin{align}
 L(t)&=\frac{\lambda}{a}(1-e^{-at}),
~~\mbox{and}~~ M(t)=\frac{\mu}{a}(1-e^{-at}).\label{M(t)-homogenous}
\end{align}
Thus, we have
\begin{align}
\alpha(t)&= \frac{\mu(e^{at}-1)}{\lambda e^{at}-\mu},
~~\mbox{and}~~\beta(t)=\frac{\lambda(e^{at}-1)}{\lambda e^{at}-\mu}=\frac{\lambda}{\mu}\alpha(t).
\label{alpha(t)-beta(t)-homogeneous}
\end{align} 
Then, we find that in the limit $t\to\infty$,
\begin{align}
\lim_{t\to\infty}\alpha(t)
 &=\left\{ \begin{array}{ll}\frac{\mu}{\lambda},~~&\mbox{if}~~a>0,\\
                             0,                 ~~&\mbox{if}~~a=0,\\
                             1,                 ~~&\mbox{if}~~a<0.
\end{array}\right. 
~~\mbox{and}~~
\lim_{t\to\infty}\beta(t)
 =\left\{ \begin{array}{ll}1                  ~~&\mbox{if}~~a>0,\\
                           0,                 ~~&\mbox{if}~~a=0,\\
                           \frac{\lambda}{\mu},~~&\mbox{if}~~a<0.
\end{array}\right. \label{limits-alpha-beta}
\end{align}
By substituting (\ref{alpha(t)-beta(t)-homogeneous})  into (\ref{G-alpha-beta}), we obtain
\begin{align}
G(z,t)&=\left(\frac{\mu(e^{at}-1)+(\lambda-\mu e^{at})z}{\lambda e^{at}-\mu-\lambda(e^{at}-1)z}\right)^{I_0},
\label{PGF-homogenous}
\end{align}
which is equivalent to what we found in Part I, Appendix A, Eqn. (A.19), with $r=0$.

\subsection{The Coefficient of Variation of the Nonhomogeneous BD Process}\label{subsec:Coeff-variation-nonhomo-BD-process}

In this section we calculate, from the PGF obtained above, the first and second moments of the infection process $I(t)$ for the nonhomogenous case.  As we did for the homogeneous case in Part I, let us take the natural logarithm of the PGF (\ref{PGF-M-L}), and differentiate it w.r.t. $z$, obtaining
\begin{align}
\frac{\partial \log G(z,t)}{\partial z}&=\frac{G'(z,t)}{G(z,t)}=I_0\left[\frac{1-L(t)}{M(t)-(L(t)-1)z}
+\frac{L(t)}{1+M(t)-L(t)z}\right].
\label{1st-deriv-G}
\end{align}
By setting $z=1$, we find
\begin{align}
E[I(t)]=\oI(t)=\frac{I_0}{1+M(t)-L(t)}=I_0 e^{s(t)},
\end{align}
which agrees with (\ref{I(t)-FA_process}) of the Feller-Arley process. 

In order to find the variance of $I(t)$, we differentiate  (\ref{1st-deriv-G}) once again, obtaining
\begin{align}
\frac{\partial^2 \log G(z,t)}{\partial z^2}=I_0\left[-\frac{(1-L(t))^2}{M(t)-((L(t)-1)z)^2} 
-\frac{L^2(t)}{(1+M(t)-L(t)z)^2}\right].
\end{align}
By setting $z=1$ in the above equation, we find the variance:
\begin{align}
\sigma_I^2(t)&=E[I^2(t)]-(E[I(t)])^2=\left.\frac{\partial^2 \log G(z,t)}{\partial z^2}\right|_{z=1}
+\left.\frac{\partial \log G(z,t)}{\partial z}\right|_{z=1} \nonumber\\
&=I_0e^{2s(t)}\left[L^2(t)-(1-L(t))^2\right]+ I_0e^{s(t)}
= I_0e^{2s(t)}(L(t)+M(t)).\label{variance-nonhomo-I(t)}
\end{align}
Thus, the coefficient of variation of $I(t)$, $c_I(t)$ is given by
\begin{align}
c_I(t)=\frac{\sigma_I(t)}{\oI(t)}=\sqrt{\frac{L(t)+M(t)}{I_0}}.\label{cv-nonhomo-I(t)}
\end{align} 


\section{Numerical Analysis of Time-Nonhomogeneous Stochastic Model}
\label{subsec:numerical-example-stochastic-nonhomo}

Let us continue pursuing the numerical example discussed in Section \ref{sec:Example-nonhomogenous-deterministic}, where $\lambda(t)$ begins to decrease from $\lambda_0$ to $\lambda_1$ at $t_1=50$, as depicted in Figure \ref{fig:lambda(t)-raised-cosine}.  

\subsection{Functions $\Sigma^{-1}(t), \alpha(t)$, and the $c_I(t)$}\label{subsubsec:Sigma-alpha-c_I(t)} 
\begin{figure}[thb]
\begin{minipage}[t]{0.30\textwidth}
\centering
\includegraphics[width=\textwidth]{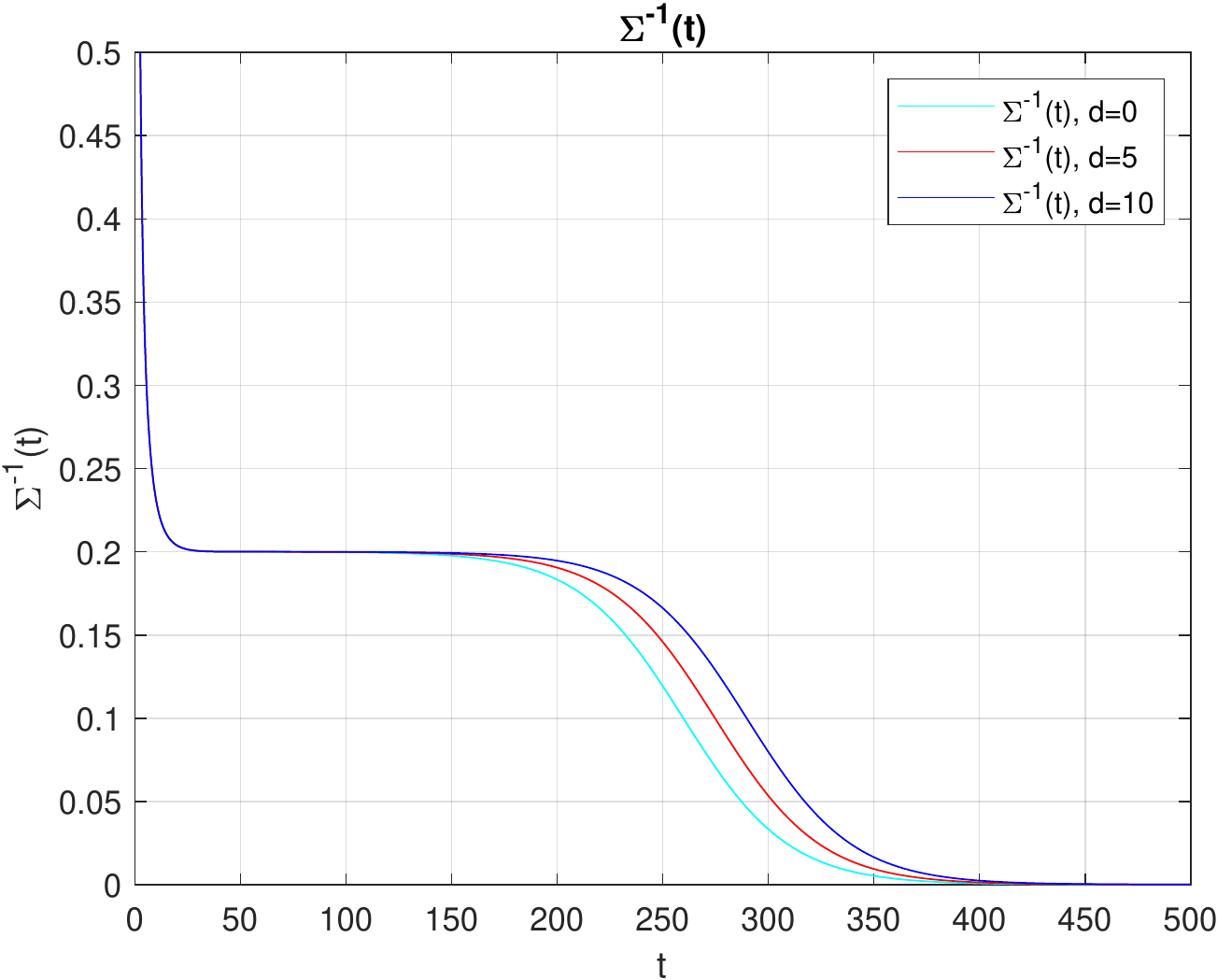}
\caption{\sf $\Sigma^{-1}(t)$, for $0\leq t\leq 500$.}
\label{fig:Sigma-1(t)}
\end{minipage}
\hspace{0.5cm}
\begin{minipage}[t]{0.30\textwidth}
\centering
\includegraphics[width=\textwidth]{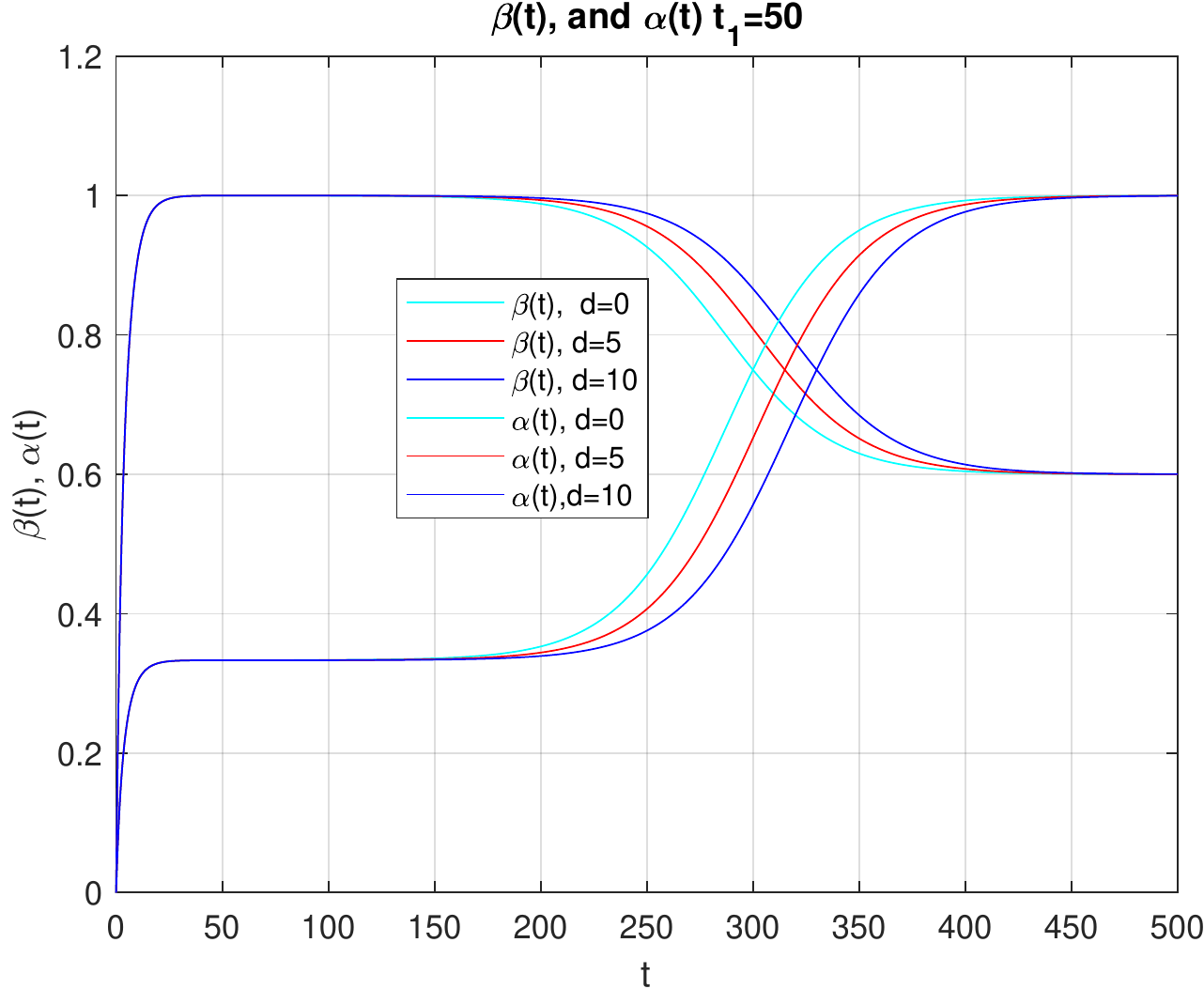}
\caption{\sf $\alpha(t)$ \& $\beta(t)$ for $0\leq t\leq 500$.}
\label{fig:alpha(t)-and-beta(t)}
\end{minipage}
\hspace{0.5cm}
\begin{minipage}[t]{0.30\textwidth}
\centering
\includegraphics[width=\textwidth]{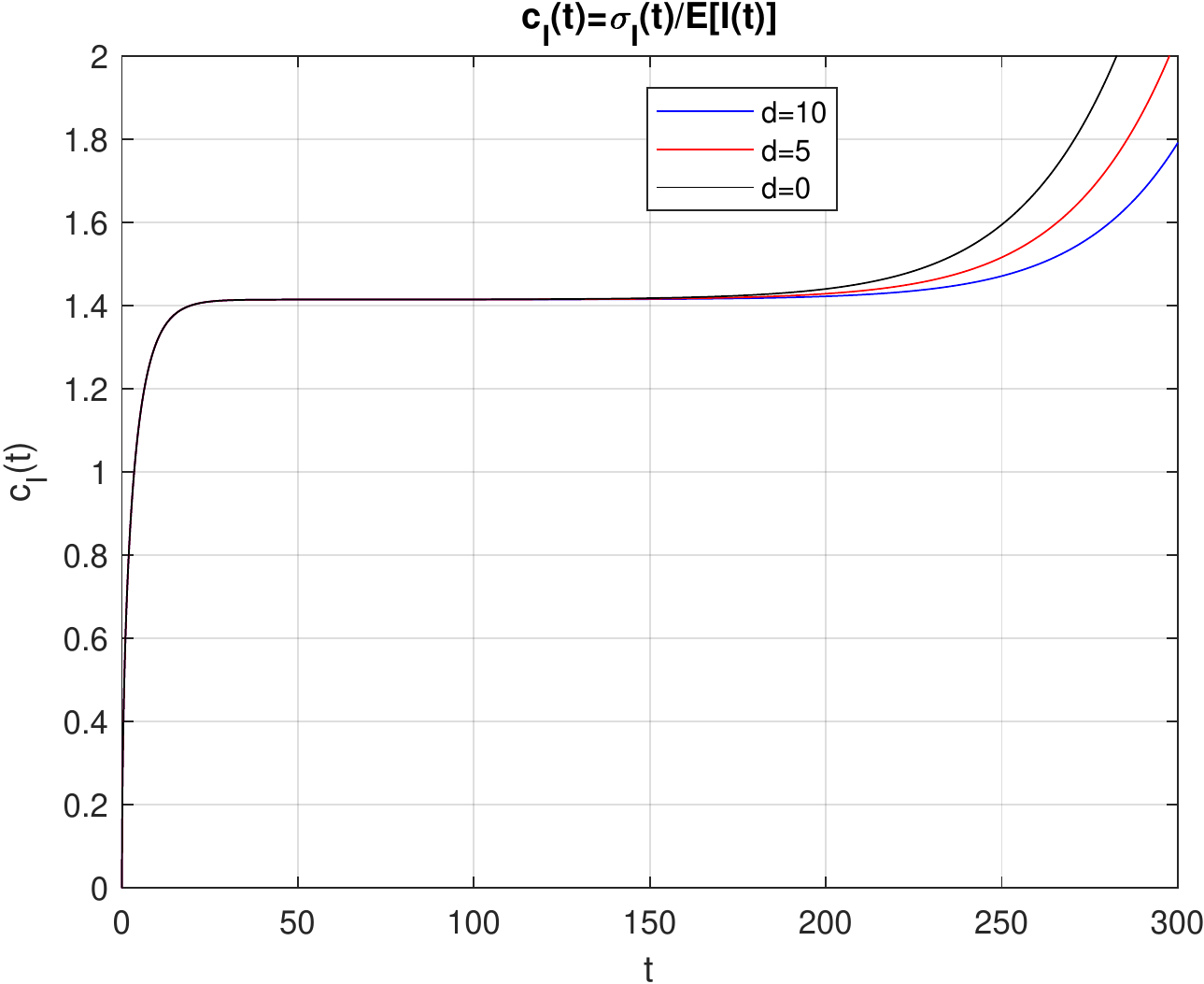}
\caption{\sf The coefficient of variation of $I(t)$, where $0<t<300$}
\label{fig:c_I(t)}
\end{minipage}
\end{figure}

In the present subsection, we present three most important graphical curves which characterize important behavior of the stochastic process $I(t)$.

The first is the behavior of the function $\Sigma^{-1}(t)\triangleq 1/\Sigma(t)$, where $\Sigma(t)$ is an integration of $e^{-s(t)}$ as defined in (\ref{function-Sigma(t)}), and plotted in Figure \ref{fig:Sigma(t)}.  The shape of $\Sigma(t)$ is determined by the behavior of the function $s(t)$ at around $s(t)\approx 0$ as discussed in Section \ref{subsec:derivation of oI(t)}.  The function $\Sigma(t)$, in turn, largely determines the shapes of the functions $N(t), L(t)$ and $M(t)$, defined by (\ref{function-N(t)}), (\ref{function-L(t)}) and (\ref{function-M(t)}), respectively.  In Figure \ref{fig:Sigma-1(t)}, we plot the function $\Sigma^{-1}(t)(=1/\Sigma(t))$. 

In Figure \ref{fig:alpha(t)-and-beta(t)}, we show the functions $\alpha(t)$ and $\beta(t)$ defined by (\ref{functions-alpha-beta}).  The function $\alpha(t)$ rapidly rises to $\frac{\mu_0}{\lambda_0}=1/3$,  as seen from (\ref{limits-alpha-beta}), stays at this level for a long time, and then begins at $t\approx 200$  to rise towards ``1", which is the limit of $\alpha(t)$ in the regime $a<0$ .  From (\ref{P_k(t)-alpha-beta-I_0=1}) we see that when $I_0=1$, the function $\alpha(t)$ is equal to $P_0(t)$.  

As the second equation in (\ref{limits-alpha-beta}) shows, the function $\beta(t)$ quickly reaches the unit level in the regime $a>0$, and stays there until $t\approx 200$, and then makes a transition to the limit $\frac{\lambda_1}{\mu_1}=0.6$ in the regime $a<0$. Note that the transitions of $\alpha(t)$ and $\beta(t)$ from their limit values in the first regime $a>0$ to the limit values in the regime $a<0$ require some time,  taking as long as 200 days (i.e., from $t\approx 200$ to $t\approx 400$) in our running example.  The horizontal axis of this Figure \ref{fig:alpha(t)-and-beta(t)} is extended up to $t=500$, whereas in most other figures of this example we plot only up to $t=300$. 

This transitional behavior of the functions $\alpha(t)$, together with the large coefficient of variation, is  perhaps the most important result of the present article, that is,

\begin{itemize}

\item[(i)] Although the expected infection function $\oI(t)$ begins to decrease once $\lambda(t)$ becomes smaller than $\mu(t)$ (hence $a(t)<0$), $\oI(t)$ does not become sufficiently small until the function $s(t)=\int_0^t a(u)\,du$ 
decreases near zero (see Figure \ref{fig:s(t)-bent-at-t=50}). In this running example, it is not until $t\approx t_1+250=300$. This can be easily computed from Figure \ref{fig:s(t)-bent-at-t=50}, from the value of $s(t_1)(\approx 10)$ and the slope $a(t)(\approx -0.04)$,i.e.,  $s(t_1)/|a(t)|=250$.  An important observation to make is that there is a considerable delay from the moment $t_1(=50)$ when a state of emergency was issued by the government until the time the infected population decreases towards zero. In our running example, it takes as many as 150 (=200-50) days until $I(t)$ becomes very small, on average, and additional 200 (=400-200) days until the infection process is expected to come to a full end. 

\item[(ii)] The transitional behavior of $\alpha(t)$ can be better understood by rewriting $\alpha(t)$ as
\begin{align}
\alpha(t)=\frac{1}{1+\mu^{-1}\Sigma^{-1}(t)},
\end{align}
As we noted earlier, $P_0(t)$, the probability that the number of infected persons at time $t$ is zero, is equal to $\alpha(t)^{I_0}$ (c.f. (\ref{P_k(t)-alpha-beta})).  Thus, the function $\alpha(t)$ should provide the policy makers with such crucial a guidance as how long its state of emergency declaration should be maintained.

\item[(iii)] Shown in Figure \ref{fig:c_I(t)} is the coefficient of variation (CV) $c_I(t)$ of the process $I(t)$, as given in  (\ref{cv-nonhomo-I(t)}). Note that the CV remains nearly constant ($\approx 1.44$ in this example) until $t\approx 200$, when the CV begins to rise. The value of the CV as large as 1.44 is due to the fact the PMF $P_k(t)$ at any given $t$ is nearly flat except its value at $k=0$ (see Figures \ref{fig:Non-homo-BD-P_k(0)-I_0=1} through \ref{fig:Non-homo-BD-P_k(350)-I_0=1} in the next section). Such a highly skewed distribution with an extremely long tail implies that the $\oI(t)$ obtained from any deterministic model can provide very limited information of the stochastic process $I(t)$. In the same token, any parameter estimations obtained from the observed data of an instance of $I(t)$ will be very unreliable, to the say the least: an estimated value of the model parameter  may be significantly deviated, more often than not, from its true (unknown) value. We will defer a full discussion on model parameter estimation until Part IV \cite{kobayashi:2021c}.

It may look counter-intuitive that the CV $c_I(t)$ begins to increase exponentially after $t$ goes beyond $\approx 200$, considering that $\oI(t)$ becomes practically down towards zero in this time frame.  This somewhat surprising behavior can be explained by observing (\ref{variance-nonhomo-I(t)}):  the variance $\sigma_I(t)$ is nearly proportional to $\sqrt{\Sigma(t)}$. 

\end{itemize}
\subsection{Numerical Plots of the Time-Dependent PMF $P_k(t)$} \label{subsec:PMF}

We obtained in Section \ref{subsec:Nonhomo-BD-process} a closed-form expression for $P_k(t)=\Pr[I(t)=k]$, which is given  by (\ref{P_k(t)-alpha-beta}).  If the initial value $I_0$ is 1, (\ref{P_k(t)-alpha-beta}) reduces to (\ref{P_k(t)-alpha-beta-I_0=1}).

From the assumption $\Pr[I(0)=I_0]=1$, it should follow that $P_k(0)=\delta_{k,I_0}$, where $\delta_{i, j}$ is Kronecker's delta \footnote{
\begin{align*}
\delta_{i,j}=\left\{\begin{array}{ll}&1,~~\mbox{if}~~i=j\\                                                                 &0,~~\mbox{otherwise}\end{array}\right.
\end{align*}
}.
To verify this, consider
\begin{align}
P_{I_0}(t)&=
\sum_{i=0}^{I_0}{I_0\choose i}{2I_0-i-1\choose I_0-i}\alpha(t)^{I_0-i}\beta(t)^{I_0-i}(1-\alpha(t)-\beta(t))^i
\end{align}
As $t\to 0$, $\alpha(t)\to 0$ and $\beta(t)\to 0$.  Then, all the terms in the above expression approach zero, except for the term that corresponds to $i=I_0$, i.e.,
\begin{align}
\lim_{t\to 0}P_{I_0}(t)&={I_0\choose I_0}{I_0-1\choose 0}\lim_{t\to 0}\alpha(t)^0\beta(t)^0(1-\alpha(t)-\beta(t))^{I_0}
=1^{I_0}=1.
\end{align}

In the limit $t\to \infty$, we have 
\begin{align}
\lim_{t\to\infty}P_k(t)=\lim_{t\to\infty}\alpha(t)^{I_0}=1, \label{extinction}
\end{align}
which implies that the infection $I(t)$ converges to zero with probability one as $t\to\infty$.  This is not surprising in view of the fact that we assume no external infected individuals (i.e., no immigration) and that we assume the effective reproduction number ${\cal R}(t)\triangleq \lambda(t)/\mu(t)<1$ for $t>t_1+d$.  In the population study, the phenomenon (\ref{extinction}) is called the \emph{extinction} of species under study.

Presented below are various cross-sections of the three-dimensional array 
\begin{align}
Z(t, k)\triangleq P_k(t),~~t\geq 0, k=0, 1, 2, \ldots, \label{Z(t,k)}
\end{align}
for the running example discussed in this article.  The first group of the plots are the PMFs of $P_k(t)$ at various points in time $t$. 

The second group is a set of plots of $P_k(t)$ vs. $t$ for a given $k=0, 1, 2, \ldots$. Note that this is not any sort of probability distribution function.

We then finally present the surface plot of the function $Z(t,k)$ of (\ref{Z(t,k)}) to provide an overall picture of the time-dependent PMF $P_k(t)$.   

The whole purpose of presenting these graphical plots is to show that the range of values that infinitely many possible realizations  (i.e., sample paths) of the BD process $I(t)$ may take on is so broad.  Thus, the expected value $\oI(t)$ or a particular instance or sample path of the stochastic process $I(t)$ \emph{cannot be claimed as a typical instance}.  This implies that we must be extremely careful in drawing any sort of conclusion on the property of $I(t)$ from a small number of realization of the process.  This point will be made further clearer in the companion paper \cite{kobayashi:2021b}, where we will present extensive results on non-homogeneous BD and BDI processes. 

\begin{enumerate}

\item \textbf{Time Dependent PMF $\mathbf{P_k(t)}$ at various instants $\textbf{t}$, with $I_0=1$}

\begin{figure}[thb]
\begin{minipage}[t]{0.30\textwidth}
\centering
\includegraphics[width=\textwidth]{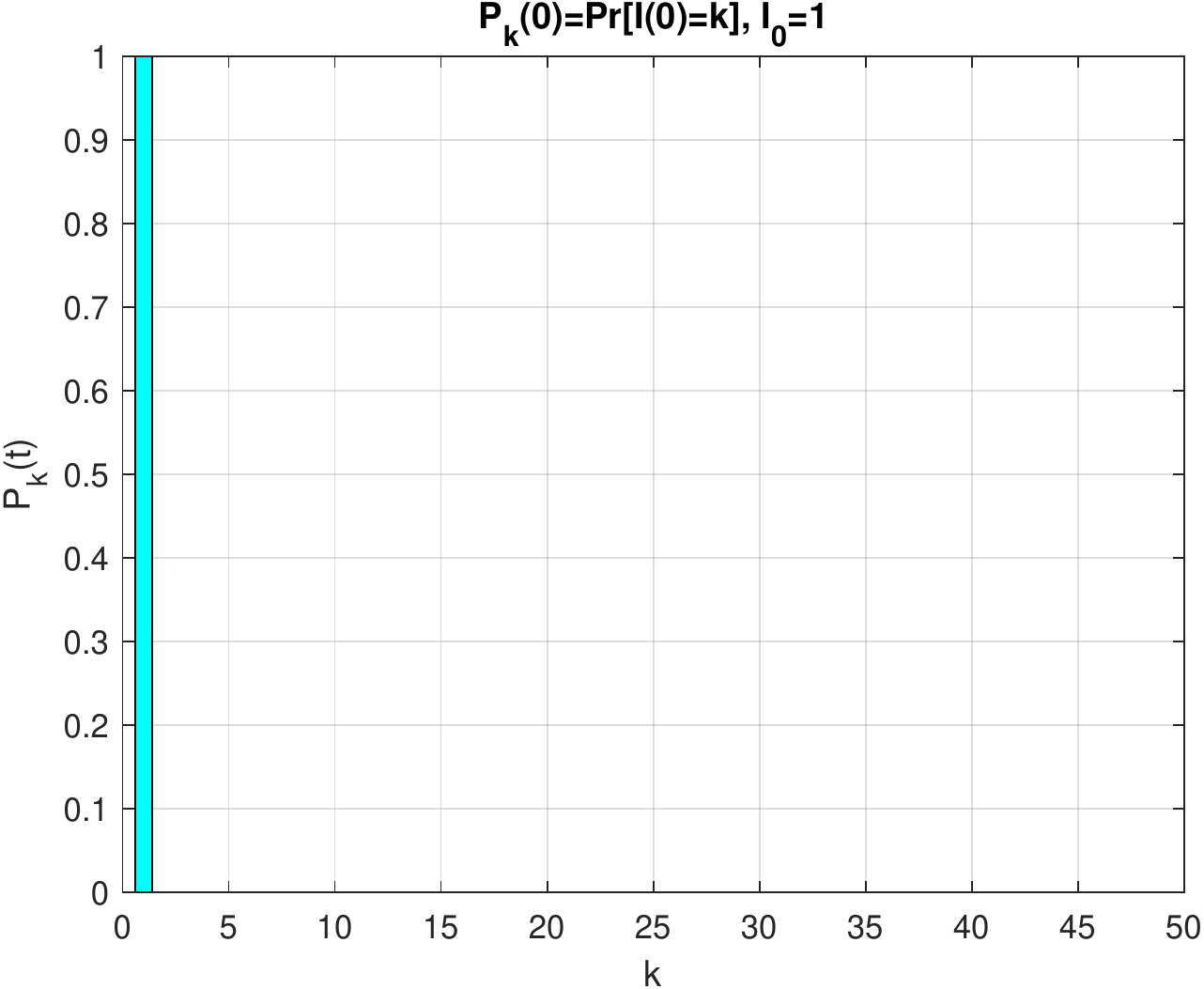}
\caption{\sf $P_k(0)$.}
\label{fig:Non-homo-BD-P_k(0)-I_0=1}
\end{minipage}
\hspace{0.5cm}
\begin{minipage}[t]{0.30\textwidth}
\centering
\includegraphics[width=\textwidth]{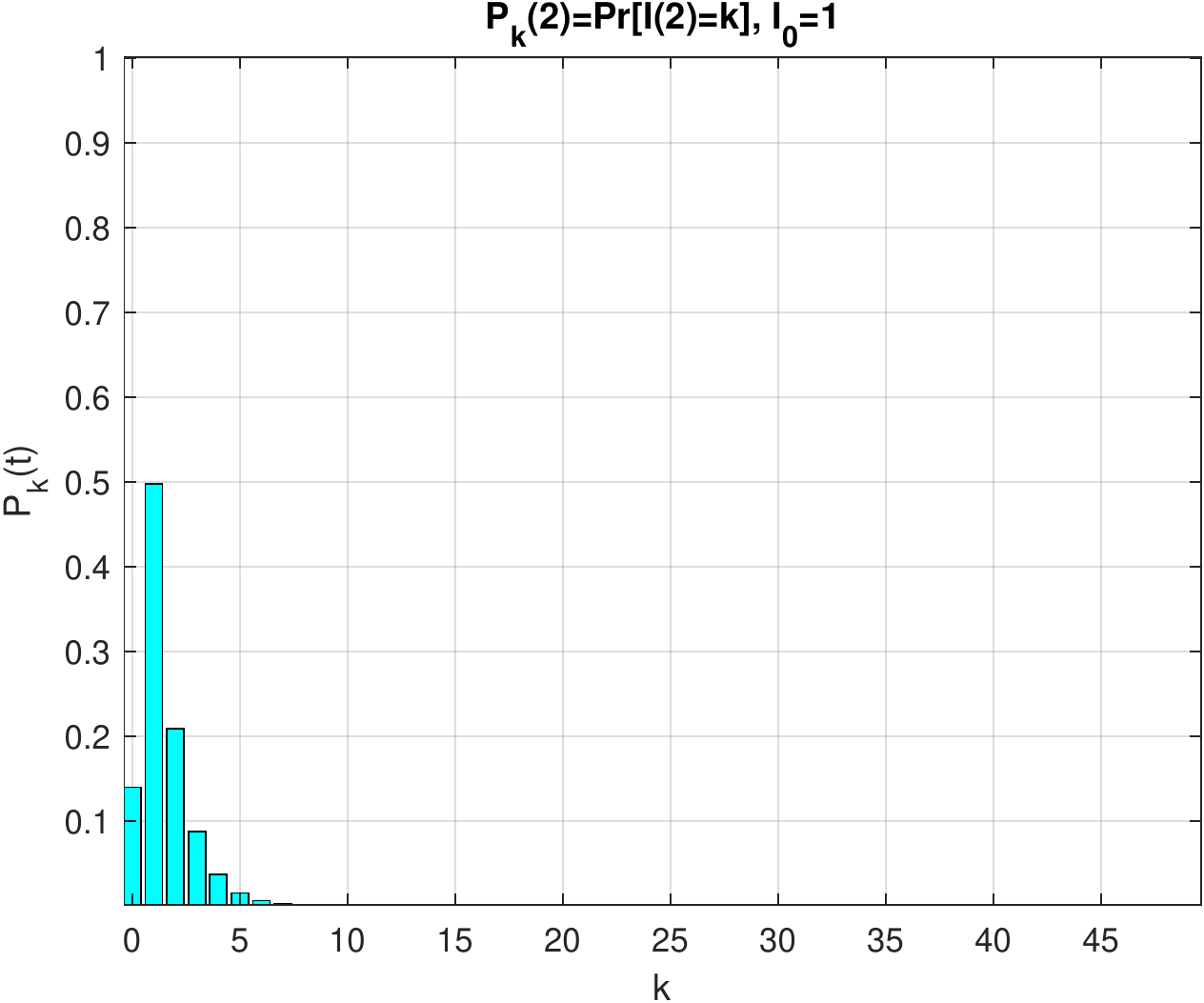}
\caption{\sf $P_k(2)$.}
\label{fig:Non-homo-BD-P_k(2)-I_0=1}
\end{minipage}
\hspace{0.5cm}
\begin{minipage}[t]{0.30\textwidth}
\centering
\includegraphics[width=\textwidth]{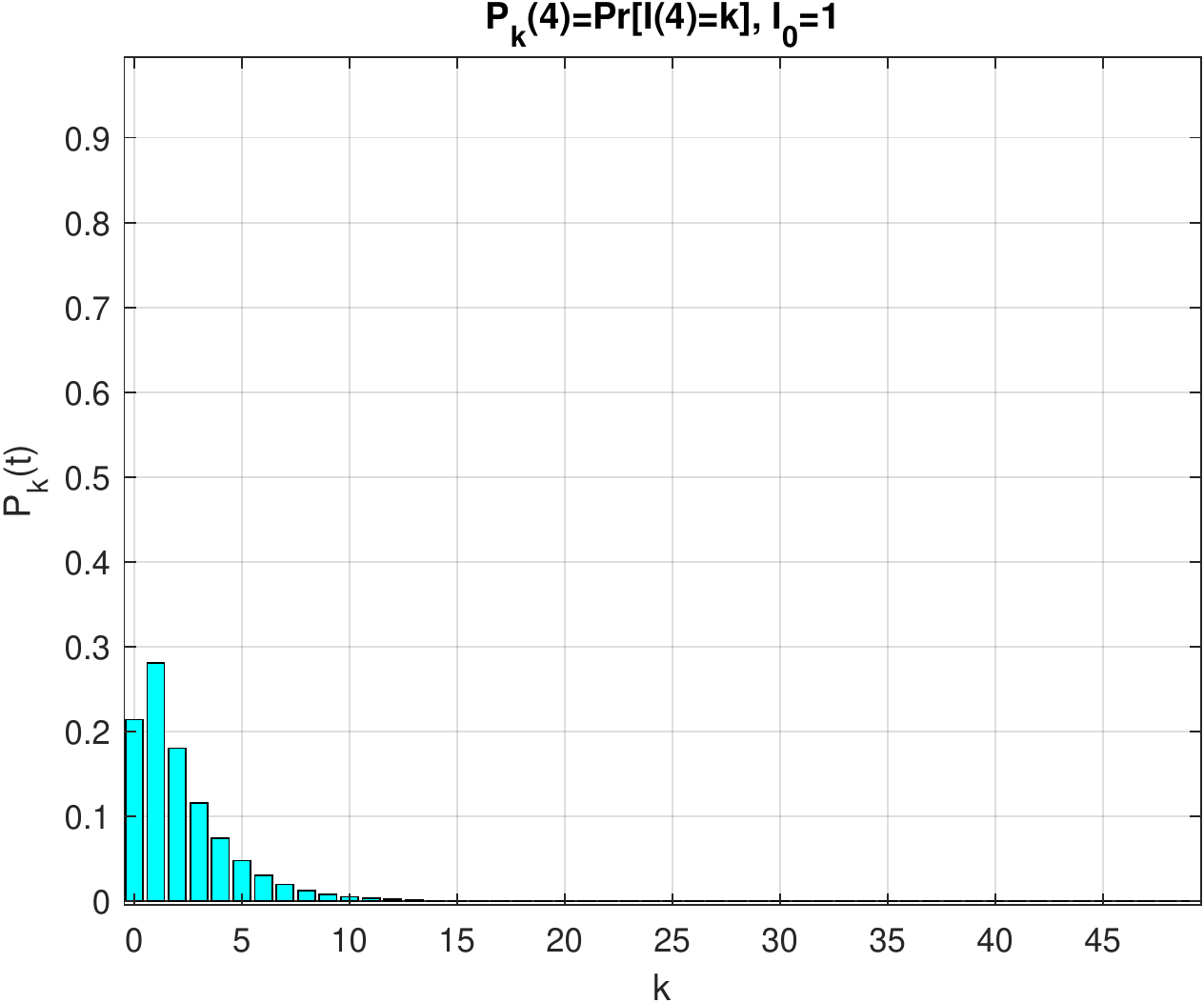}
\caption{\sf $P_k(4)$.}
\label{fig:Non-homo-BD-P_k(4)-I_0=1}
\end{minipage}
\end{figure}
\begin{figure}[hbt]
\begin{minipage}[h]{0.30\textwidth}
\centering
\includegraphics[width=\textwidth]{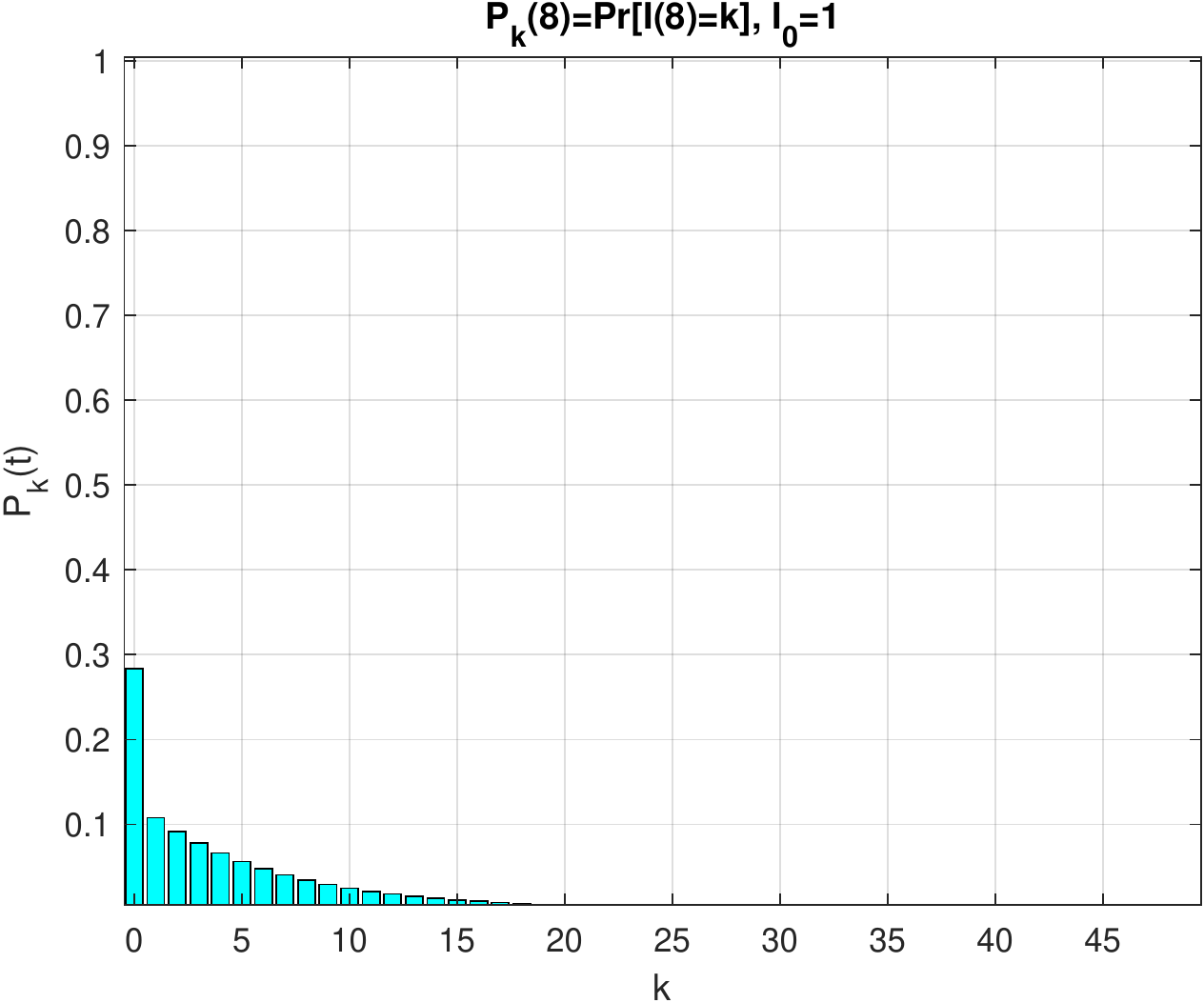}
\caption{\sf  $P_k(8)$.}
\label{fig:Non-homo-BD-P_k(8)-I_0=1}
\end{minipage}
\hspace{0.5cm}
\begin{minipage}[h]{0.30\textwidth}
\centering
\includegraphics[width=\textwidth]{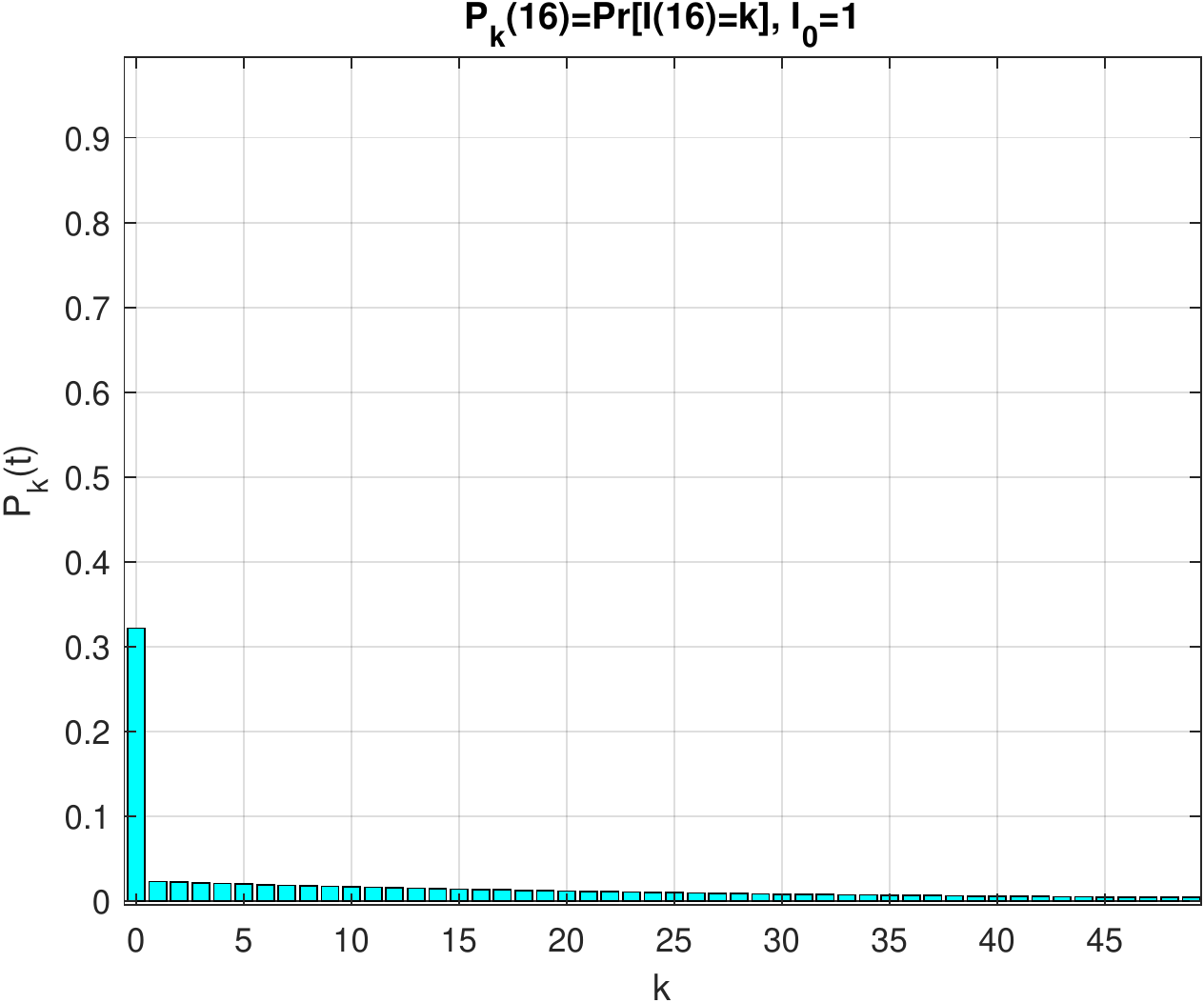}
\caption{\sf $P_k(16)$.}
\label{fig:Non-homo-BD-P_k(16)-I_0=1}
\end{minipage}
\hspace{0.5cm}
\begin{minipage}[h]{0.30\textwidth}
\centering
\includegraphics[width=\textwidth]{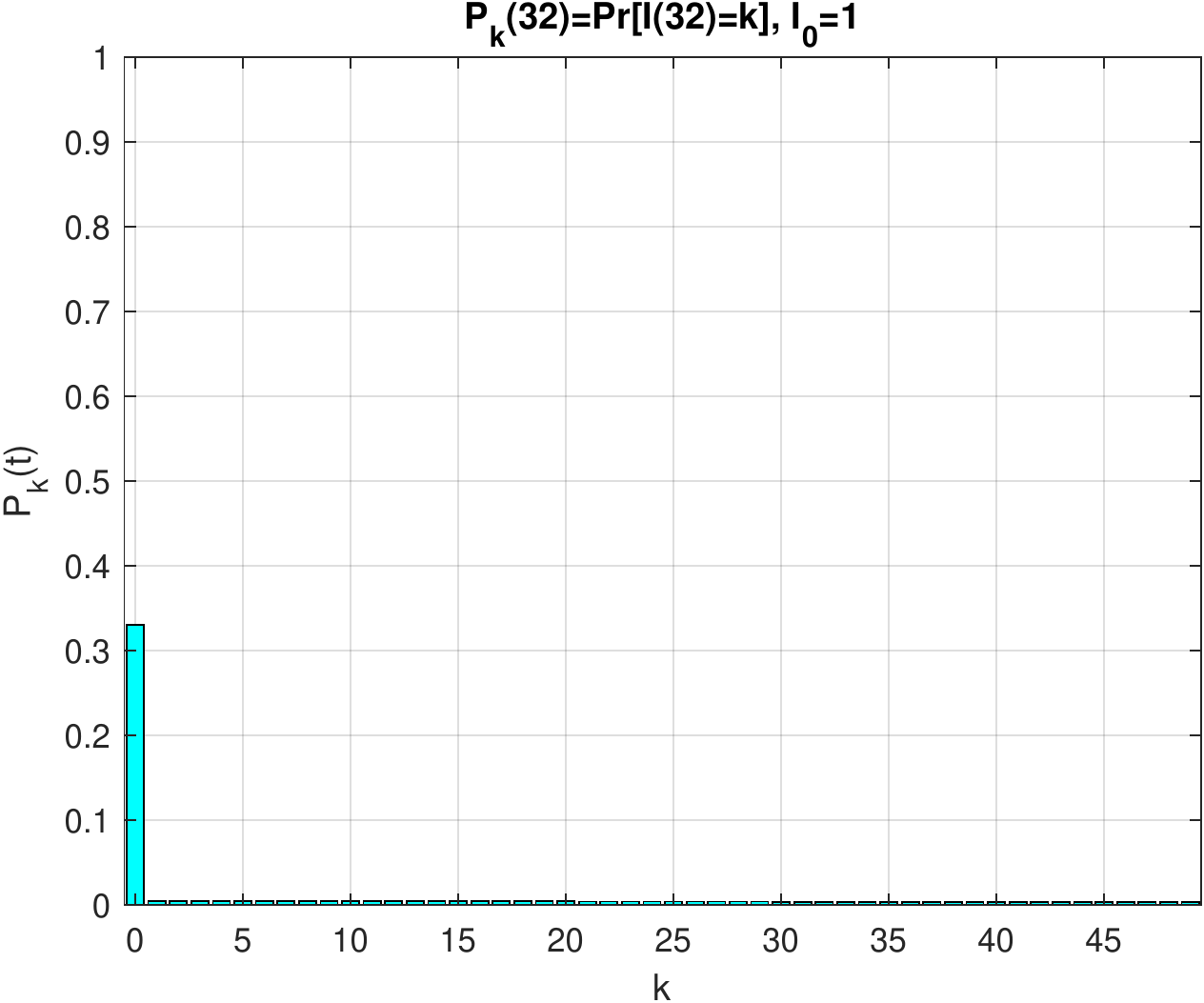}
\caption{\sf $P_k(32)$.}
\label{fig:Non-homo-BD-P_k(32)-I_0=1}
\end{minipage}
\end{figure}
\begin{figure}[bth]
\begin{minipage}[b]{0.30\textwidth}
\centering
\includegraphics[width=\textwidth]{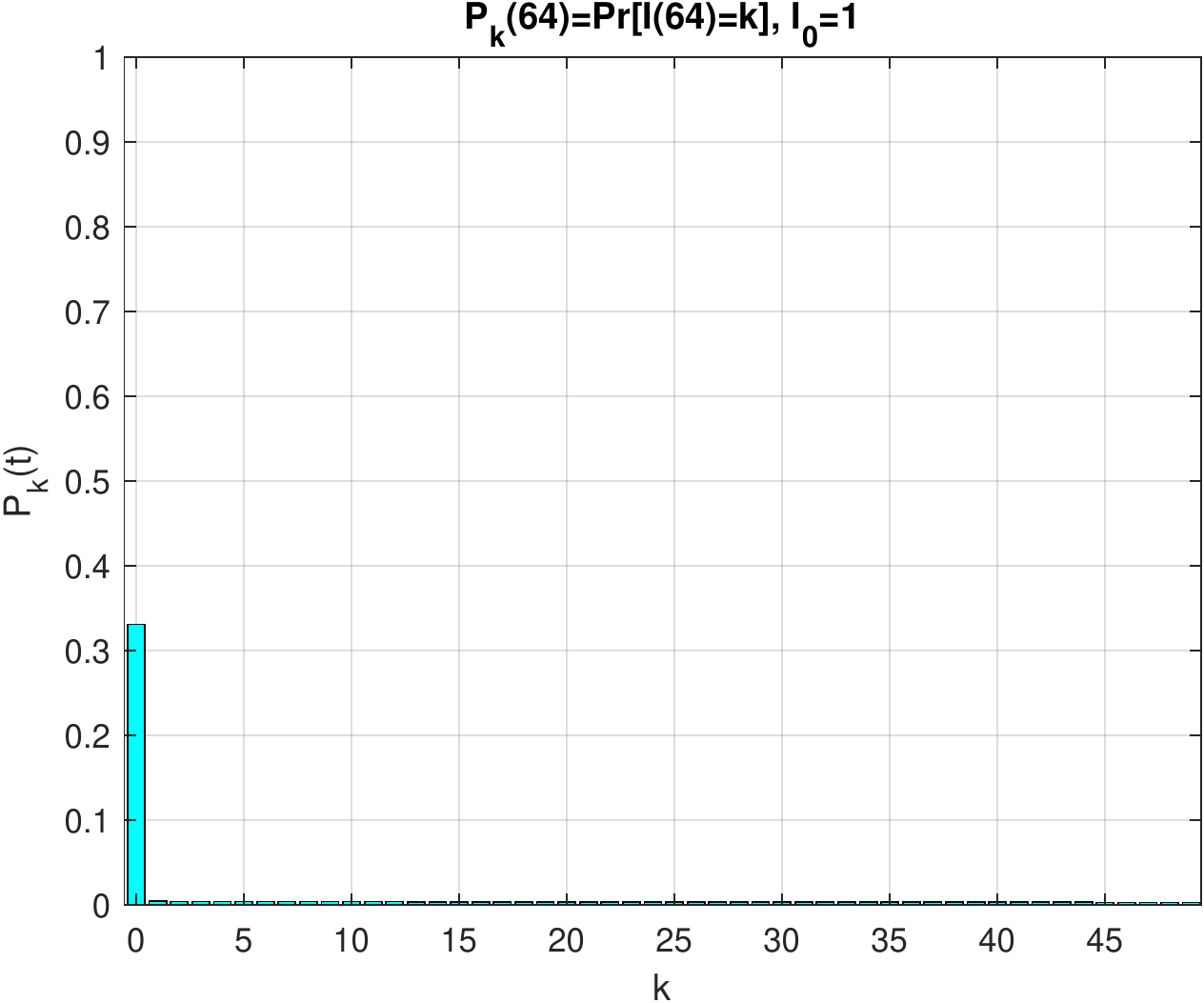}
\caption{\sf $P_k(64)$.}
\label{fig:Non-homo-BD-P_k(64)-I_0=1}
\end{minipage}
\hspace{0.5cm}
\begin{minipage}[b]{0.30\textwidth}
\centering
\includegraphics[width=\textwidth]{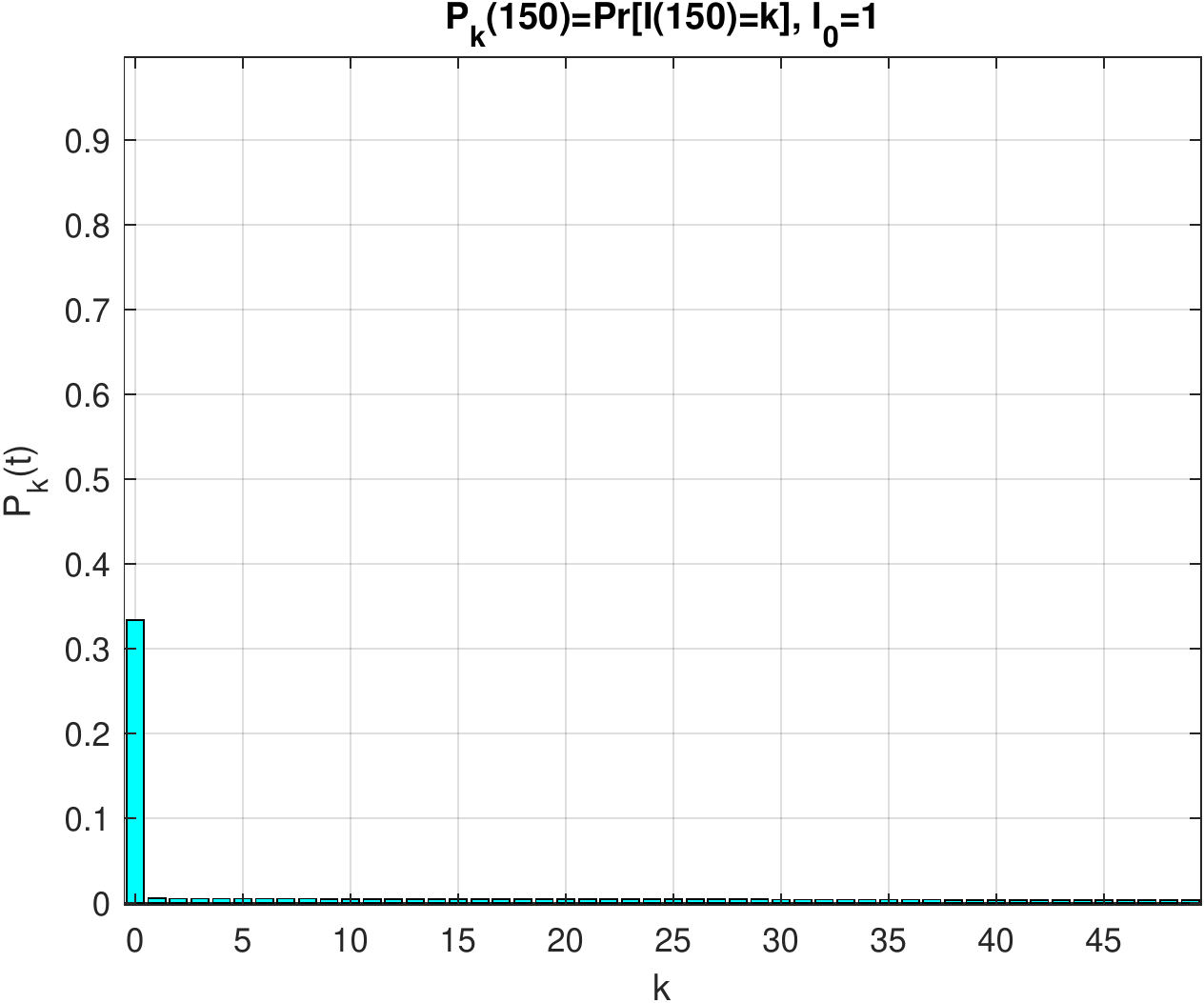}
\caption{\sf $P_k(150)$.}
\label{fig:Non-homo-BD-P_k(150)-I_0=1}
\end{minipage}
\hspace{0.5cm}
\begin{minipage}[b]{0.30\textwidth}
\centering
\includegraphics[width=\textwidth]{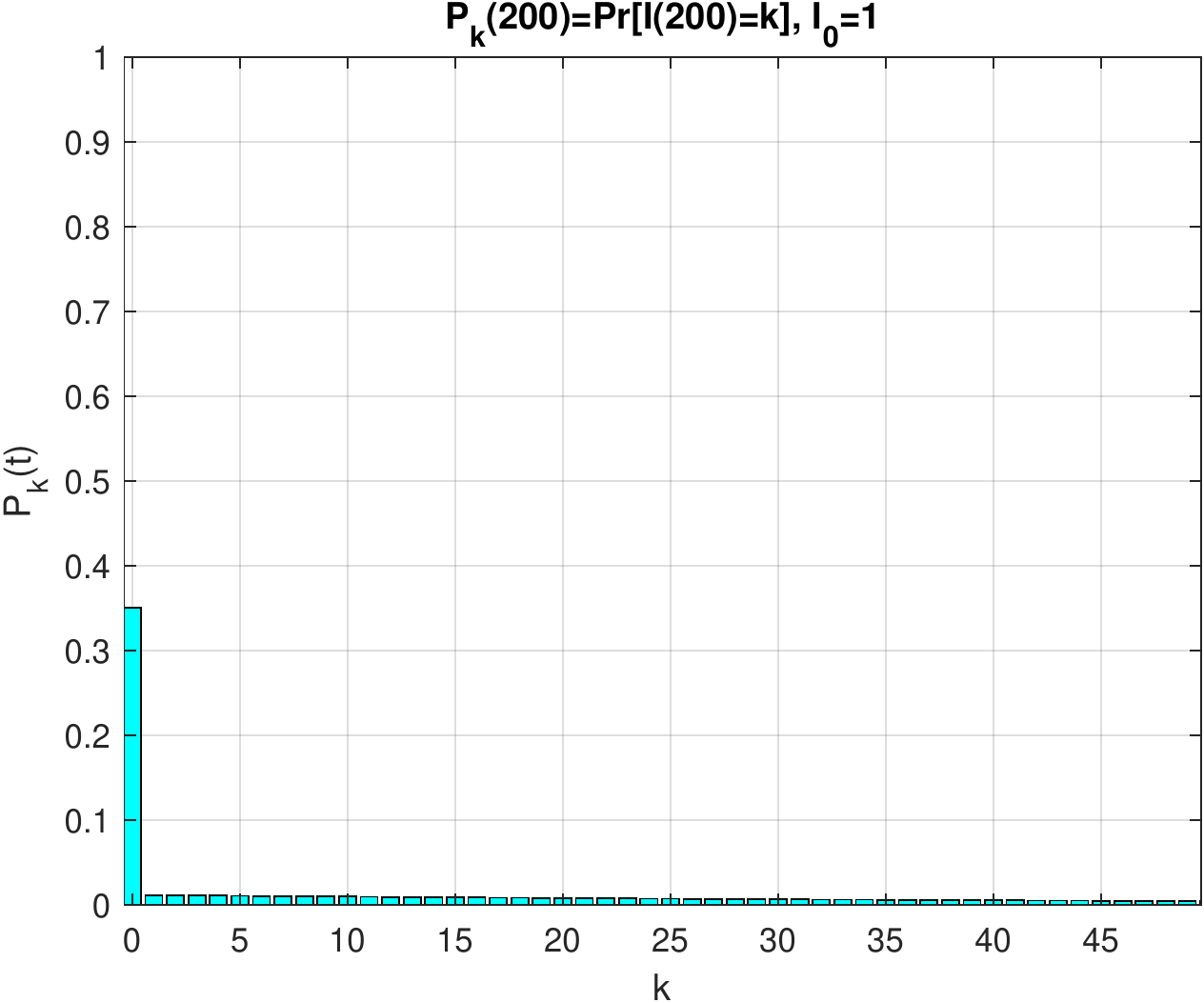}
\caption{\sf $P_k(200)$.}
\label{fig:Non-homo-BD-P_k(200)-I_0=1}
\end{minipage}
\end{figure}

\begin{figure}[thb]
\begin{minipage}[t]{0.30\textwidth}
\centering
\includegraphics[width=\textwidth]{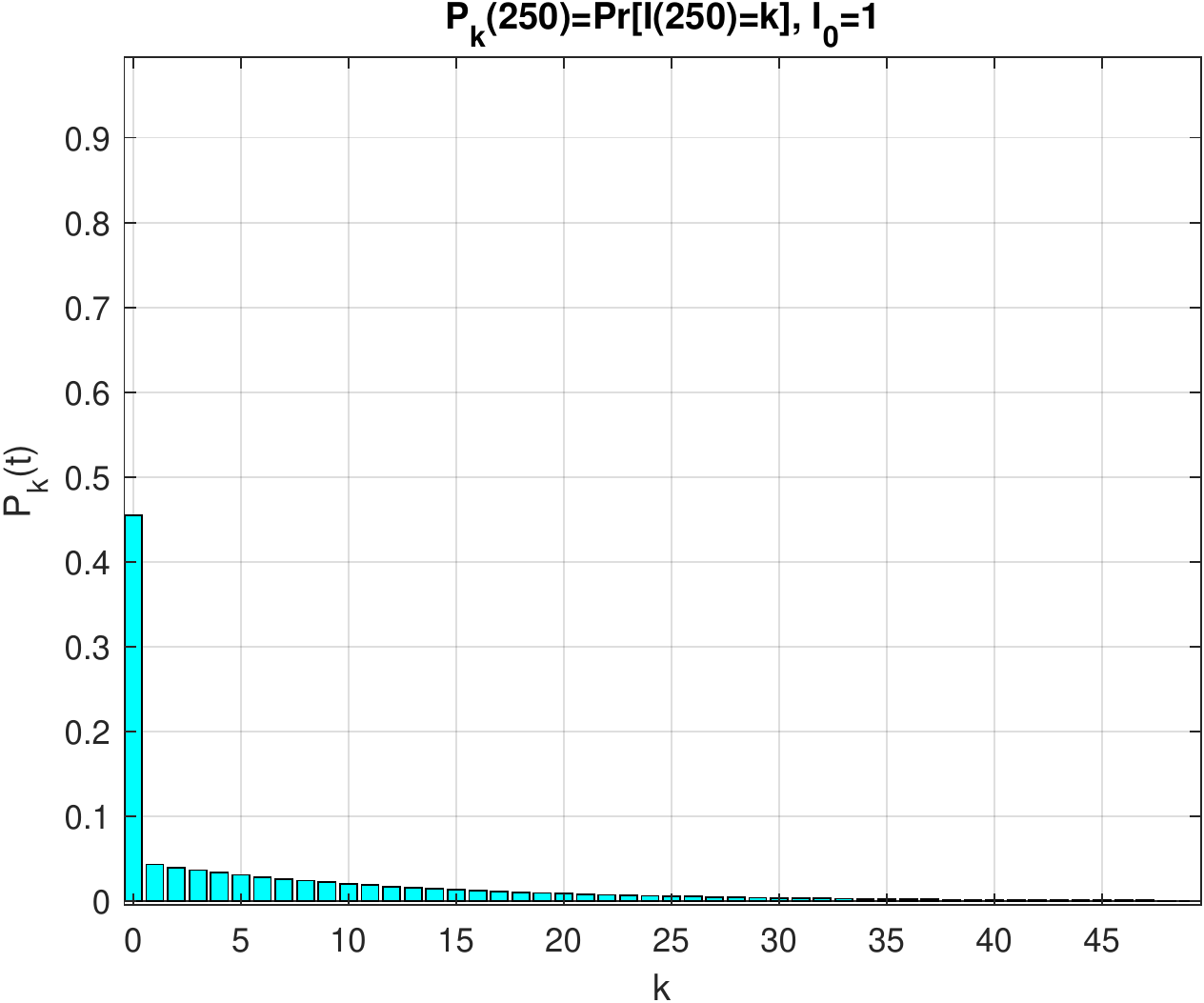}
\caption{\sf $P_k(250)$.}
\label{fig:Non-homo-BD-P_k(250)-I_0=1}
\end{minipage}
\hspace{0.5cm}
\begin{minipage}[t]{0.30\textwidth}
\centering
\includegraphics[width=\textwidth]{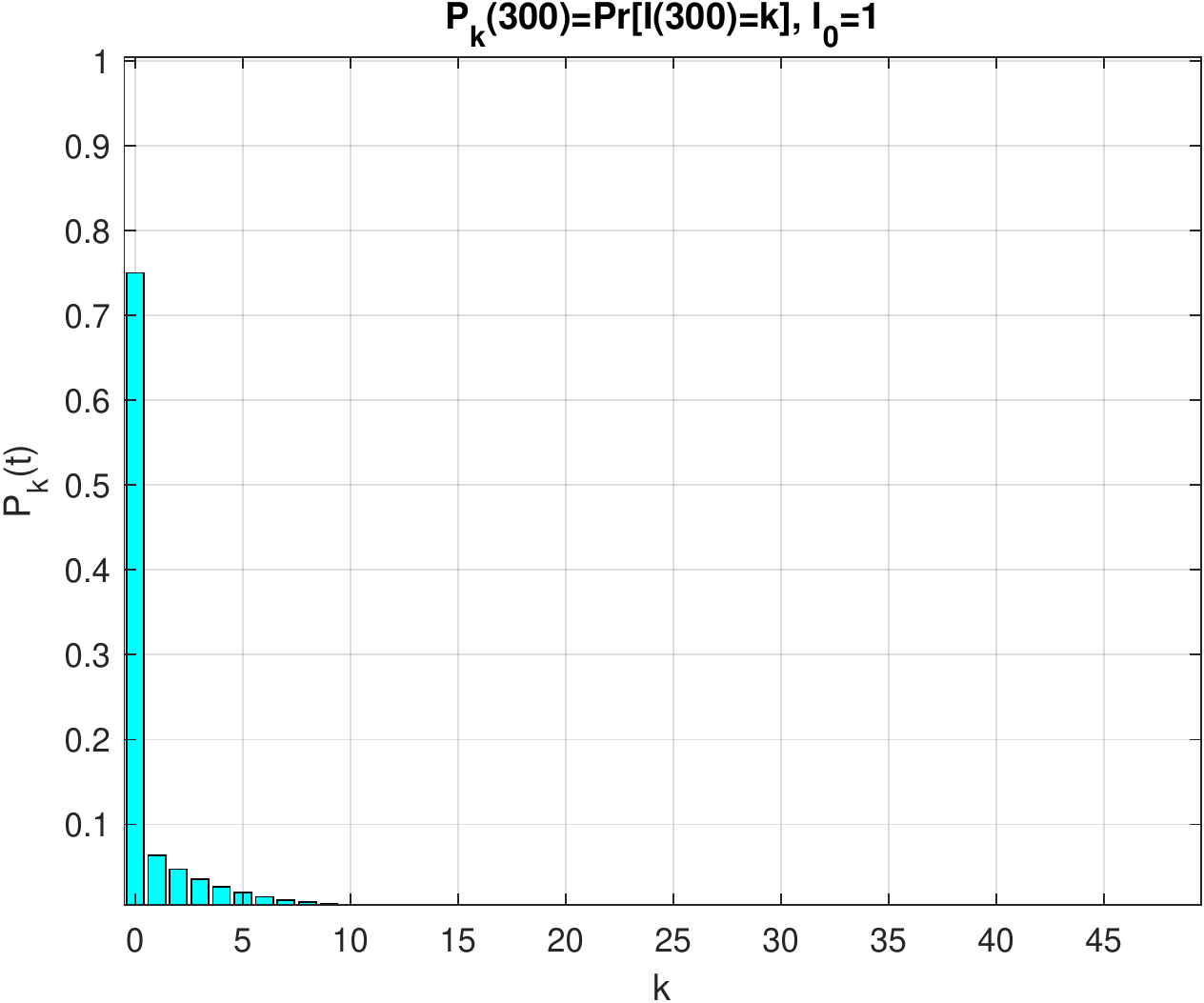}
\caption{\sf $P_k(300)$.}
\label{fig:Non-homo-BD-P_k(300)-I_0=1}
\end{minipage}
\hspace{0.5cm}
\begin{minipage}[t]{0.30\textwidth}
\centering
\includegraphics[width=\textwidth]{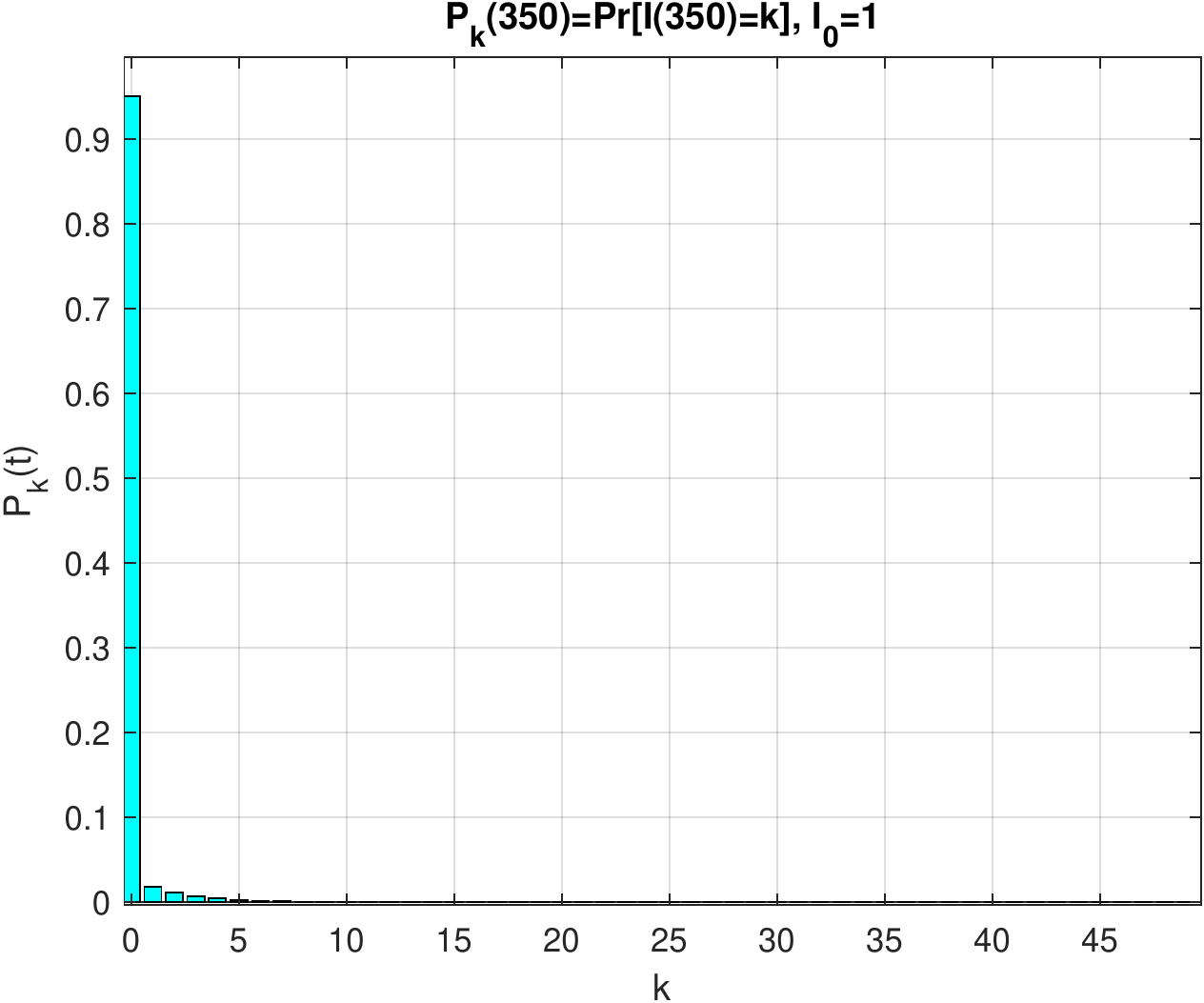}
\caption{\sf $P_k(350)$.}
\label{fig:Non-homo-BD-P_k(350)-I_0=1}
\end{minipage}
\end{figure}

\clearpage

\item \textbf{$P_k(t)$ vs. $t$ for given $k$}:

\begin{figure}[thb]
\begin{minipage}[t]{0.30\textwidth}
\centering
\includegraphics[width=\textwidth]{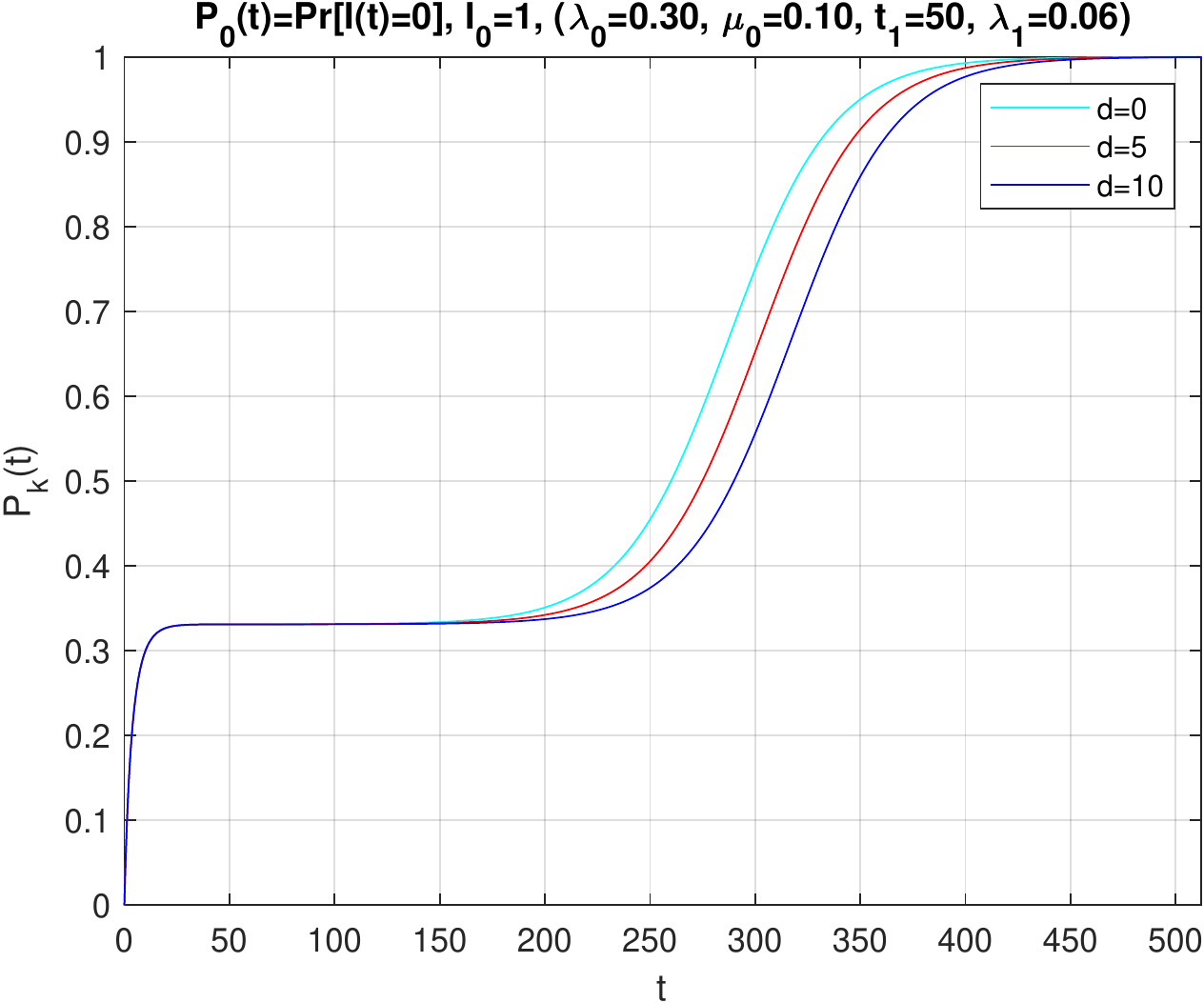}
\caption{\sf $P_0(t)$.}
\label{fig:Non-homo-BD-P_0(t)-I_0=1}
\end{minipage}
\hspace{0.5cm}
\begin{minipage}[t]{0.30\textwidth}
\centering
\includegraphics[width=\textwidth]{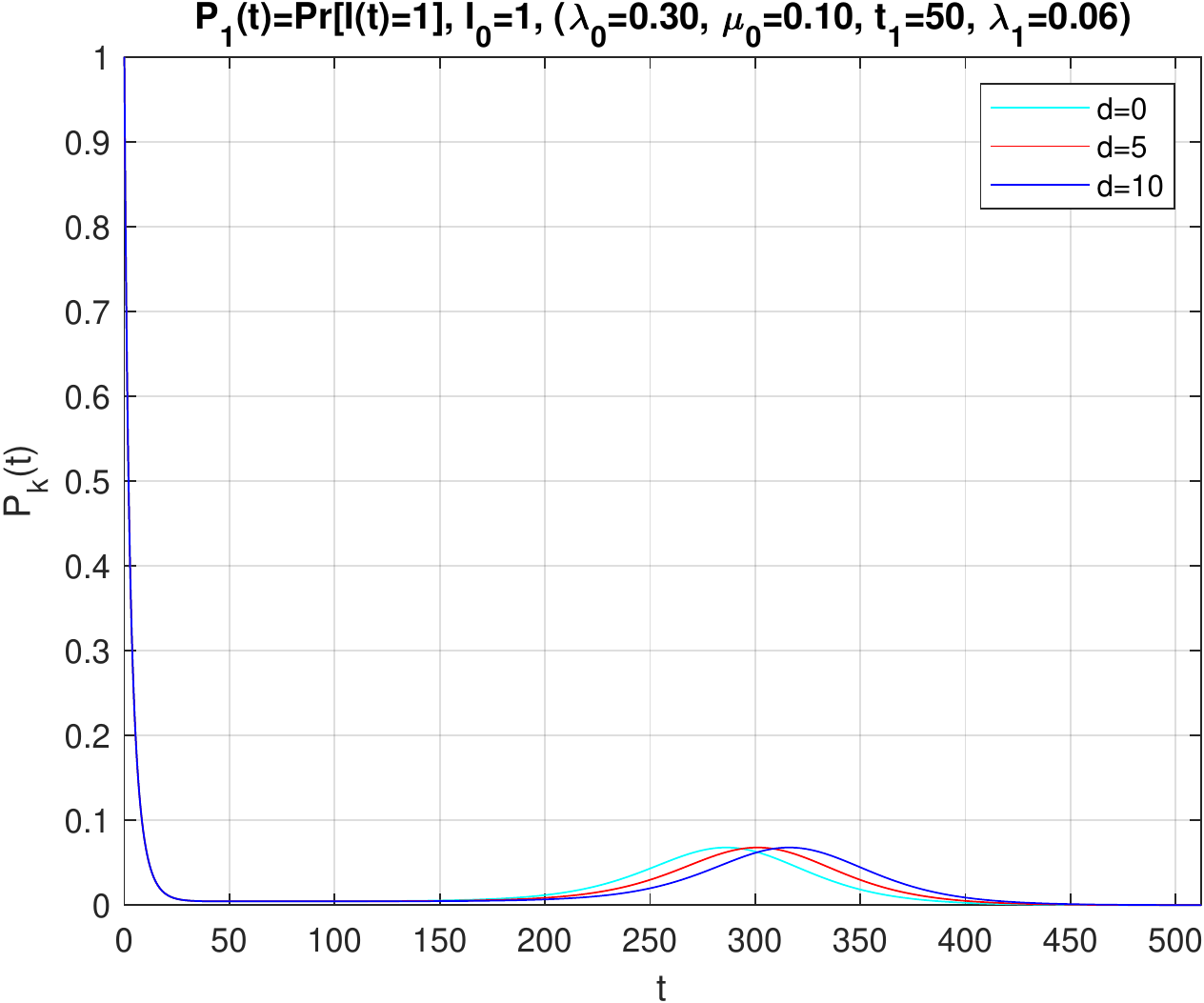}
\caption{\sf $P_1(t)$.}
\label{fig:Non-homo-BD-P_1(t)-I_0=1}
\end{minipage}
\hspace{0.5cm}
\begin{minipage}[t]{0.30\textwidth}
\centering
\includegraphics[width=\textwidth]{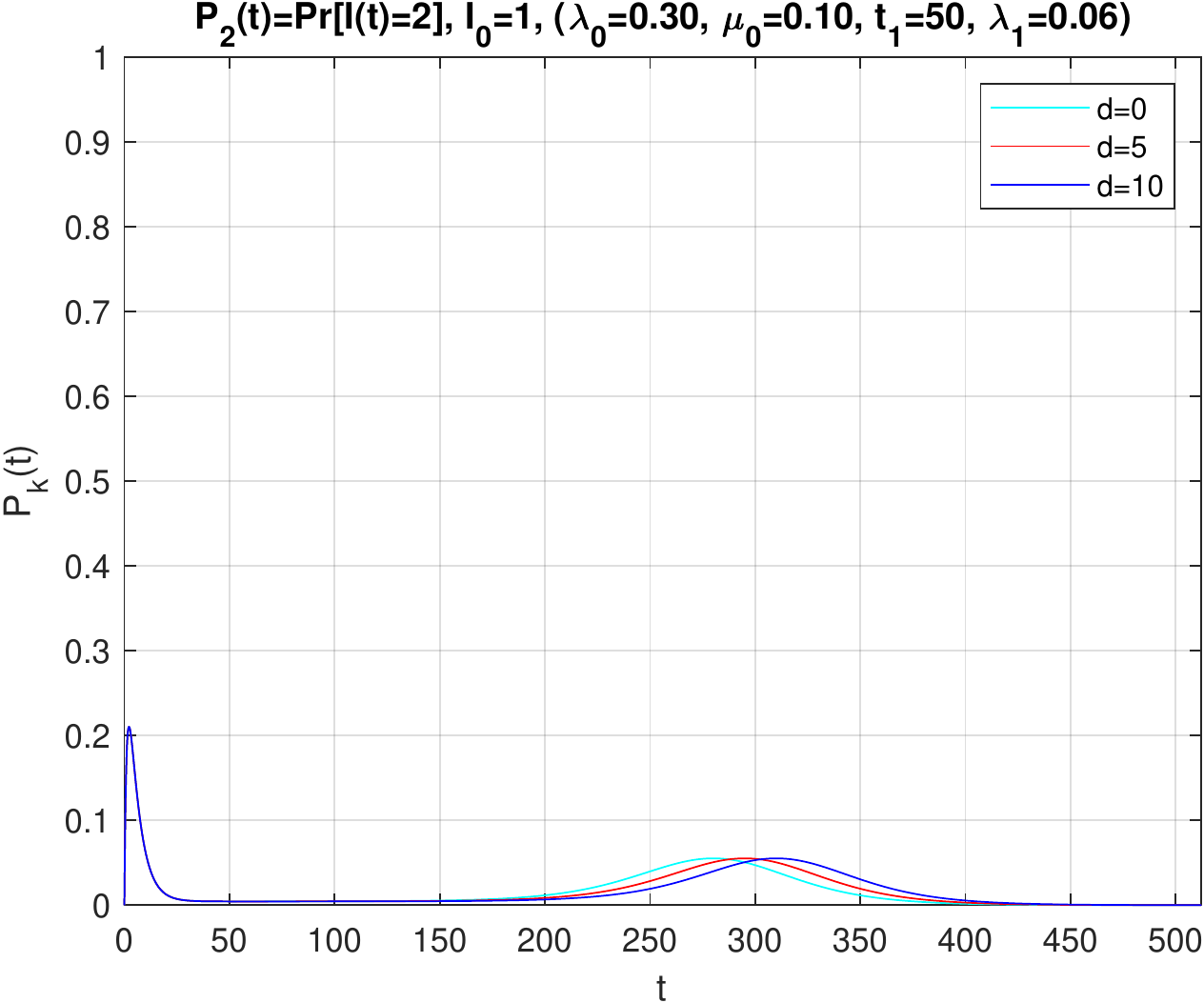}
\caption{\sf $P_2(t)$.}
\label{fig:Non-homo-BD-P_2(t)-I_0=1}
\end{minipage}
\end{figure}
\begin{figure}[hbt]
\begin{minipage}[h]{0.30\textwidth}
\centering
\includegraphics[width=\textwidth]{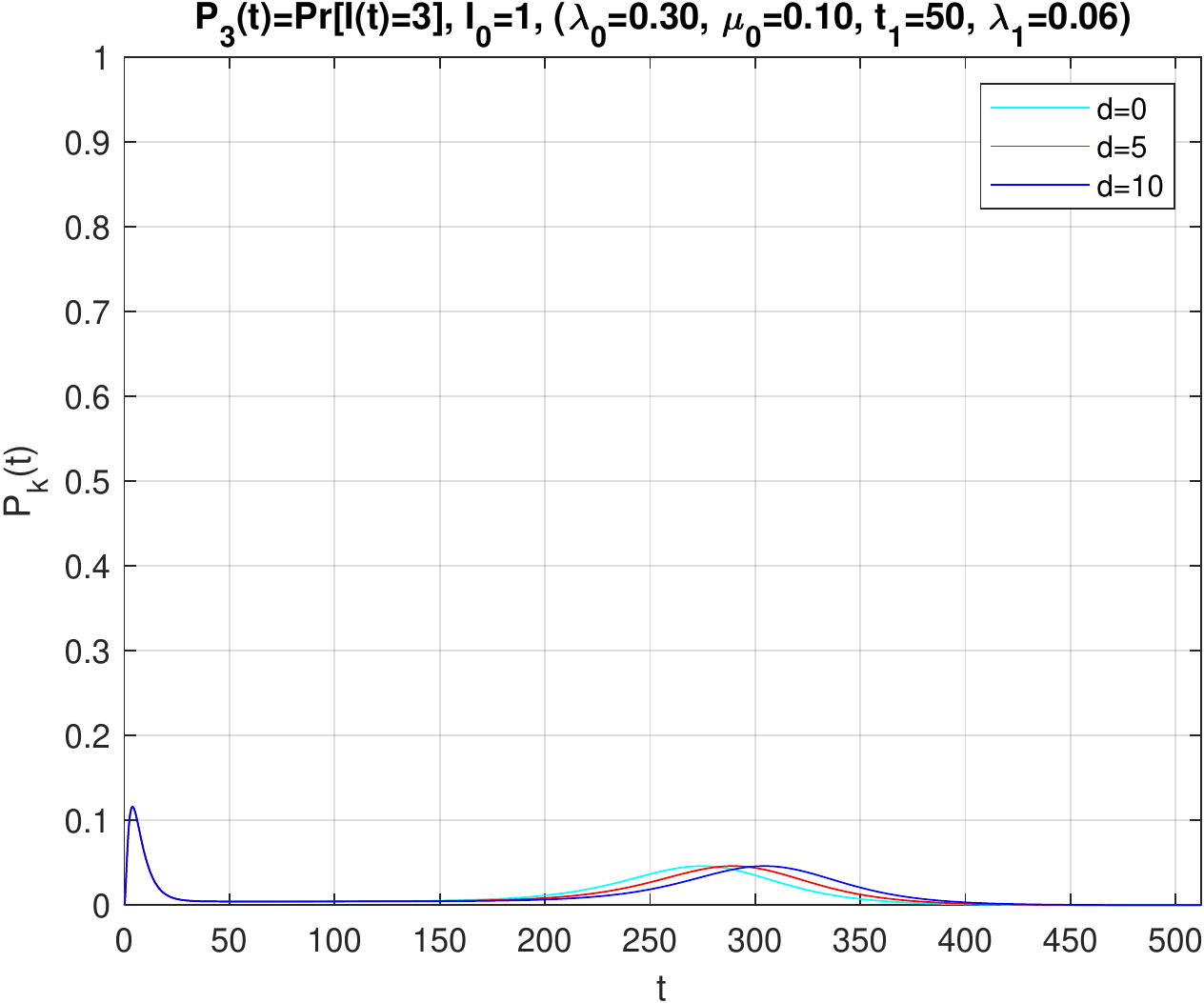}
\caption{\sf  $P_3(t)$.}
\label{fig:Non-homo-BD-P_3(t)-I_0=1}
\end{minipage}
\hspace{0.5cm}
\begin{minipage}[h]{0.30\textwidth}
\centering
\includegraphics[width=\textwidth]{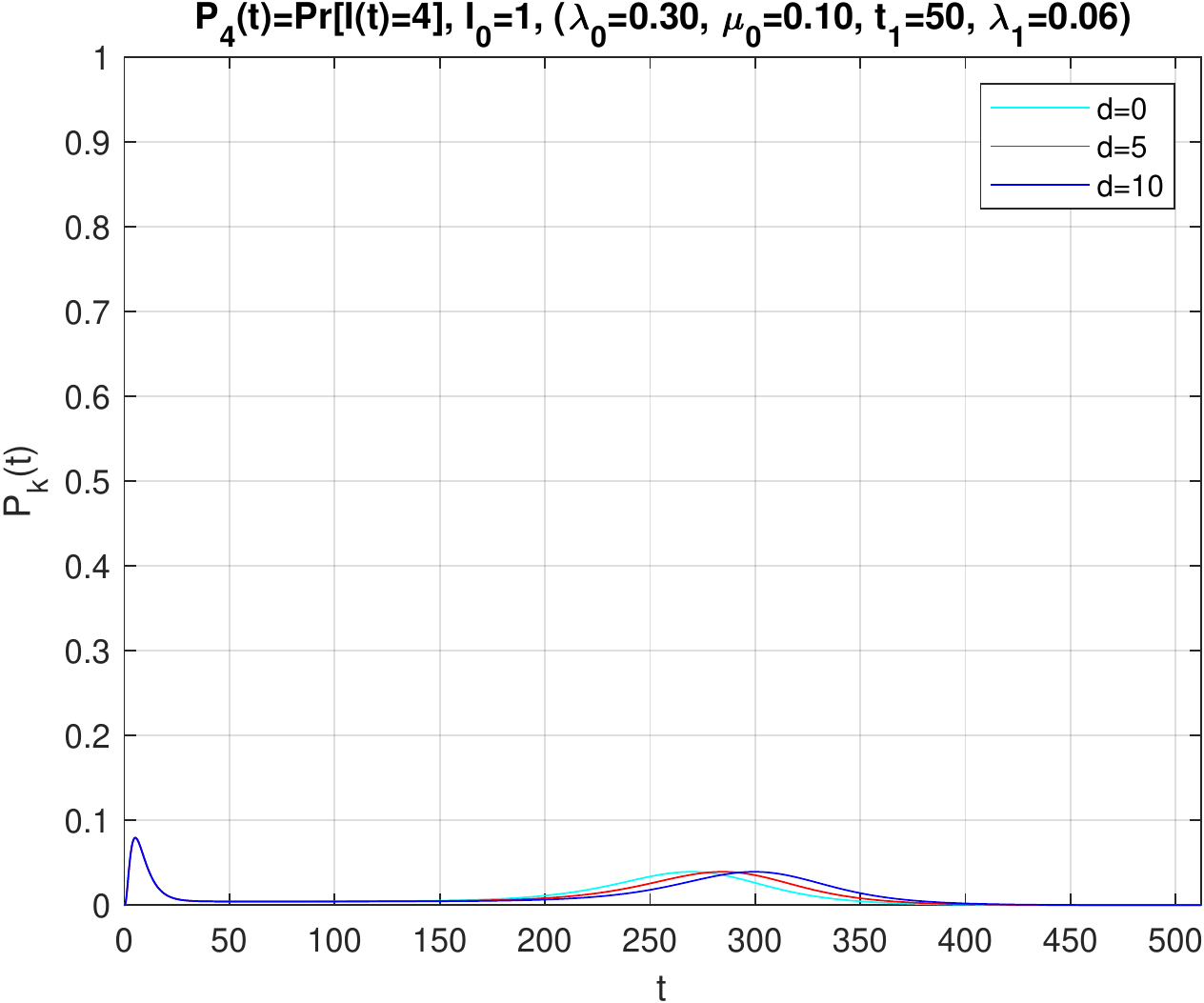}
\caption{\sf $P_4(t)$.}
\label{fig:Non-homo-BD-P_4(t)-I_0=1}
\end{minipage}
\hspace{0.5cm}
\begin{minipage}[h]{0.30\textwidth}
\centering
\includegraphics[width=\textwidth]{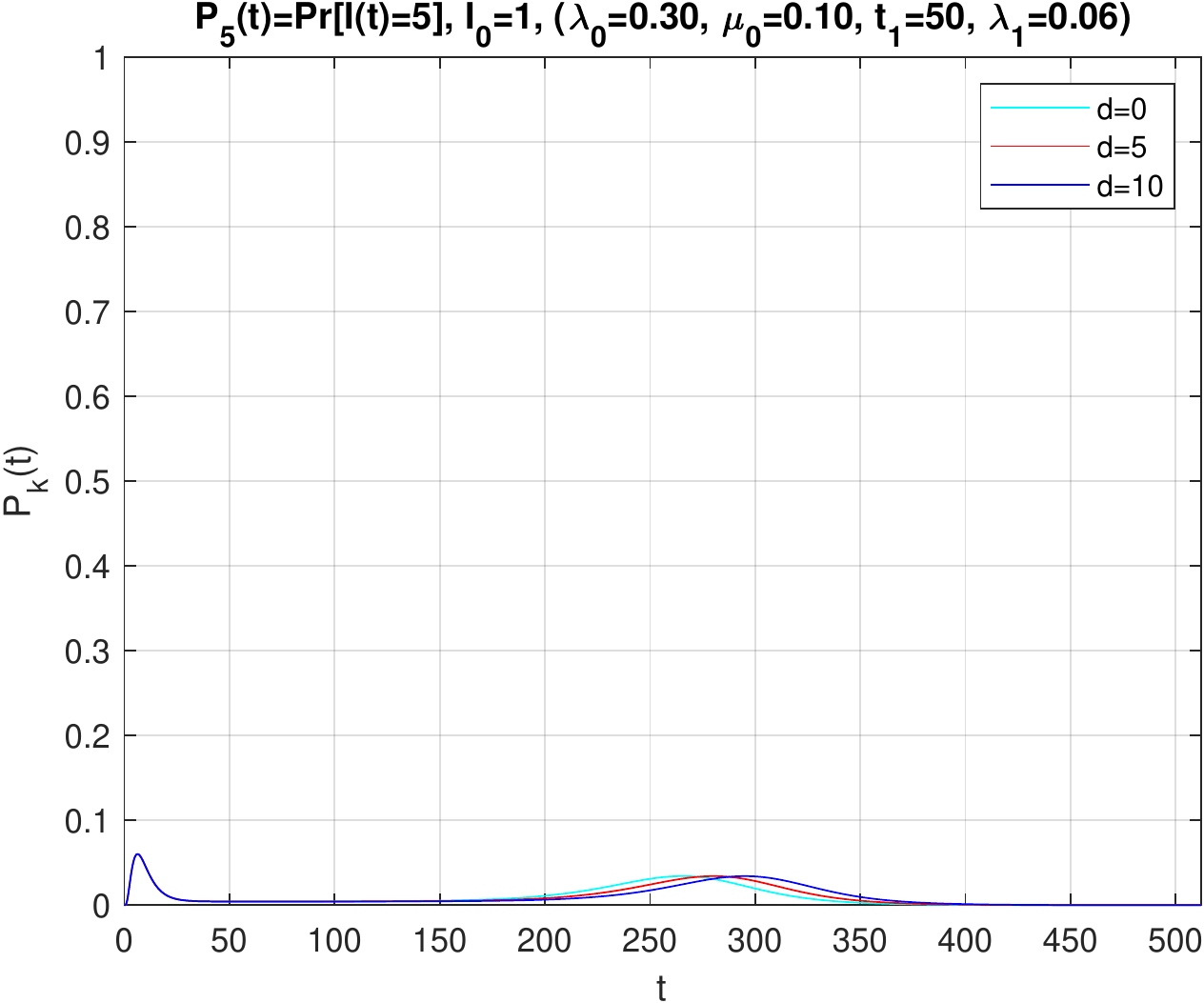}
\caption{\sf $P_5(t)$.}
\label{fig:Non-homo-BD-P_5(t)-I_0=1}
\end{minipage}
\end{figure}
\begin{figure}[bth]
\begin{minipage}[b]{0.30\textwidth}
\centering
\includegraphics[width=\textwidth]{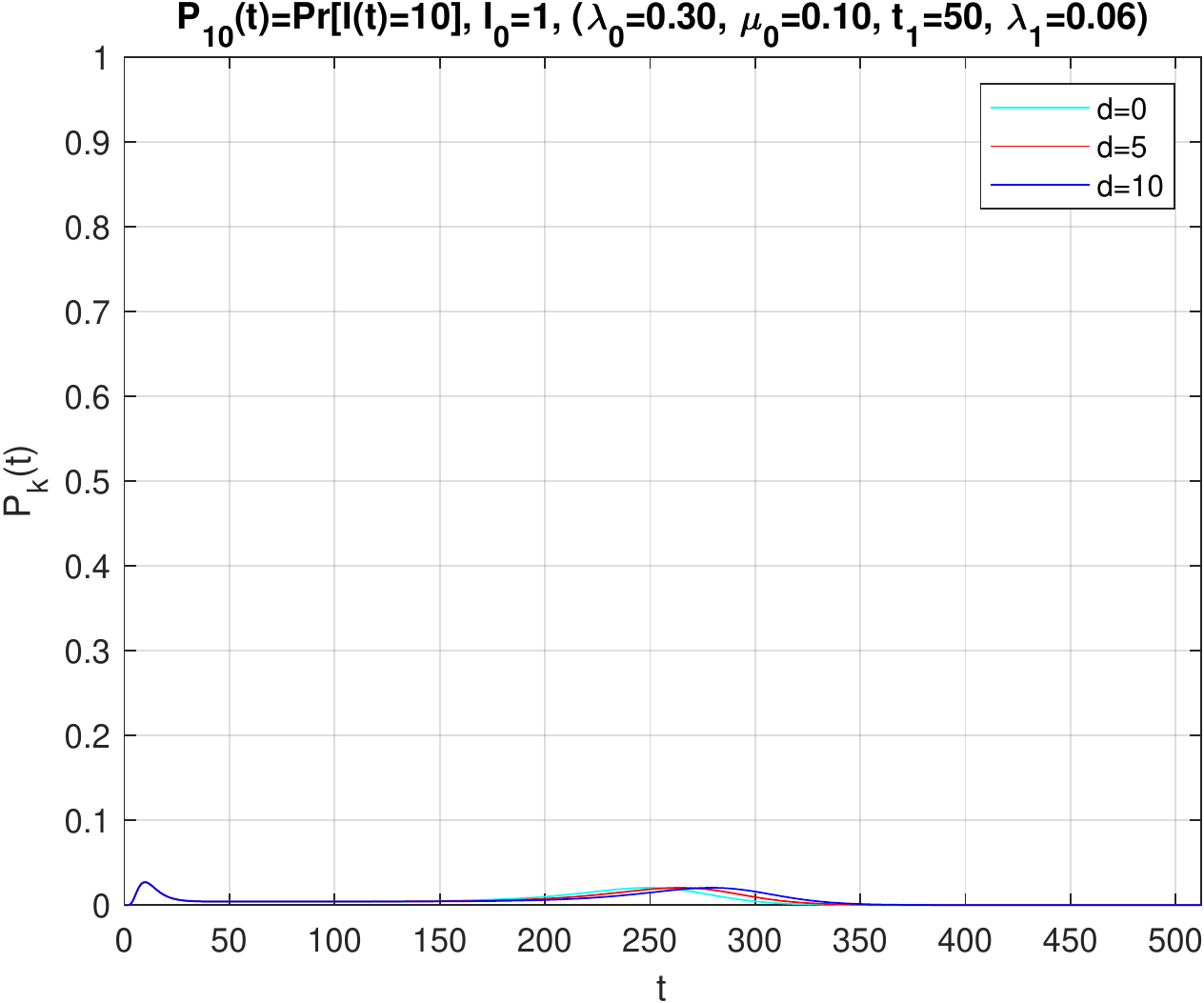}
\caption{\sf $P_{10}(t)$.}
\label{fig:Non-homo-BD-P_10(t)-I_0=1}
\end{minipage}
\end{figure}
\clearpage

\item \textbf{Mesh Plot of $Z(k,t)=P_k(t)$}:
Shown in Figure \ref{fig:Non-homo-BD-Z(k,t)-I_0=1} is a bird eye view of the time-dependent function $P_k(t)$ over the $(k, t)$ coordinate. This plot summarizes the various plots presented above.

\begin{figure}
\begin{minipage}[b]{0.80\textwidth}
\centering
\includegraphics[width=\textwidth]{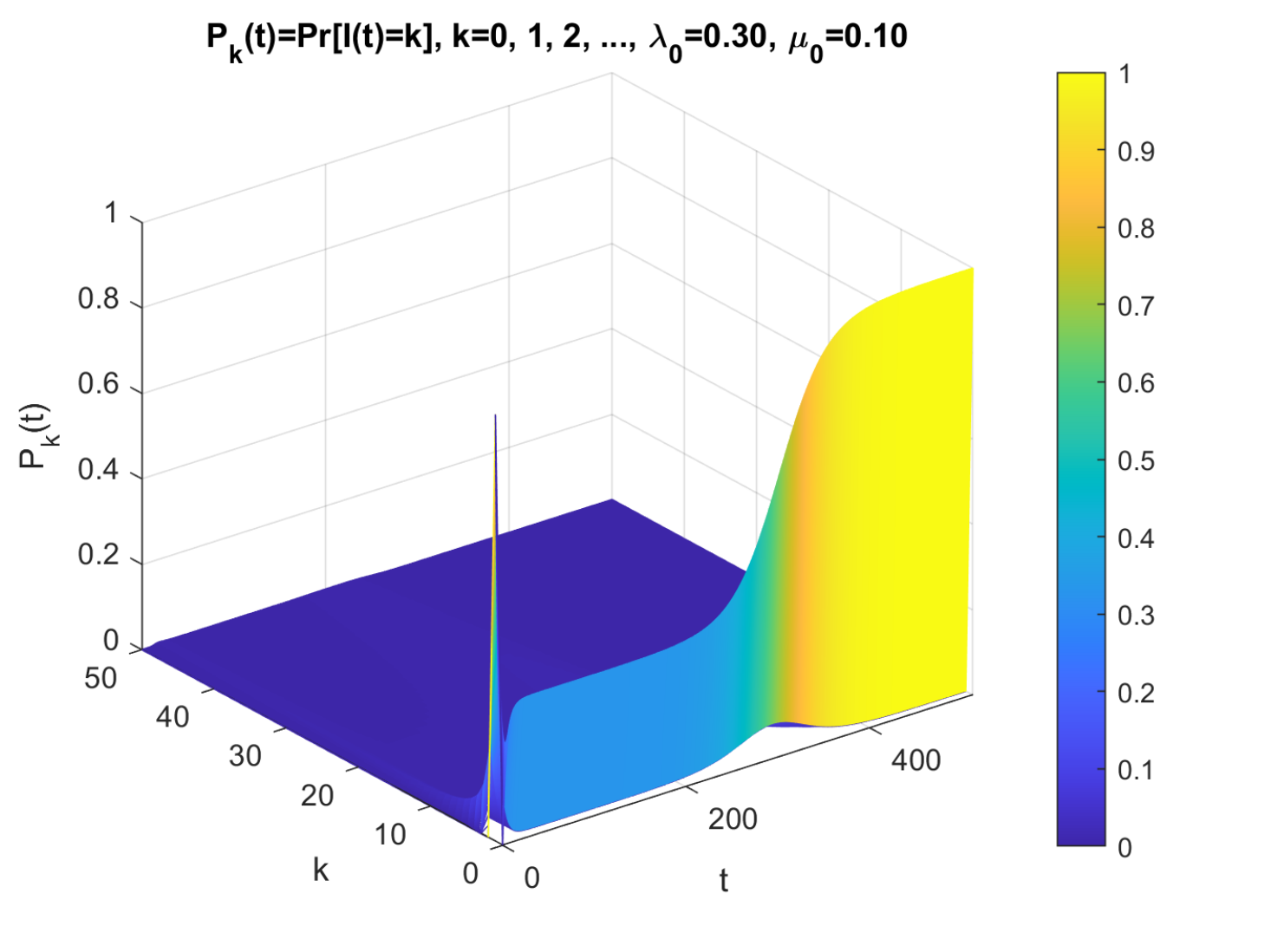}
\caption{\sf  $Z(k,t)=P_k(t)$ over the $(k t)$ plane}
\label{fig:Non-homo-BD-Z(k,t)-I_0=1}
\end{minipage}
\end{figure}

\end{enumerate}

\section{Discussion and Future Plans}

In the present article we presented a theoretical analysis of our stochastic model of an infectious disease based on the time-nonhomogeneous BD and BDI processes.

\begin{enumerate}

\item  First, we discussed a time-nonhomegeneous deterministic model, from which the expectation of stochastic processes of our interest (i.e., $\oI(t), \oA(t), \oB(t), \oR(t)$) were obtained. Among the three model parameter functions $\lambda(t), \mu(t)$ and $\nu(t)$, the difference of the first two, i.e., $a(t)=\lambda(t)-\mu(t)$, is of primary importance, and its integrated function $s(t)=\int_0^t a(u)\,du$, and its exponentiation play central roles in the analysis. 

\item We presented a hypothetical scenario, in which a government declares a state of emergency, requesting its public to significantly reduce their activities that may incur infections.  We showed how even a small delay in implementing the new order will result in a further increase in infections for a while, and prolong the period until $\oI(t)$ diminishes to practically zero. 

\item The analysis of the stochastic version showed that the BD process without immigration (i.e., no external arrival of the infected from the outside) is solvable exactly. An exact analysis of time-nonhomogeneour BDI process model remains an open question. 

\item The function $\Sigma(t)=\int_0^t e^{-s(u)}\,du$, is the most important determines the timing and duration of the transition in the function $\alpha(t)$ defined by (\ref{functions-alpha-beta}). The function $\alpha(t)$ is  directly related to $P_0(t)$, the probability that the infection comes to a halt by time $t$. 

\item From the graphical plots of the time-dependent function $P_k(t)$ presented in the last section reinforce our earlier claim (see \cite{kobayashi:2020b} Abstract) that it would be a futile effort to attempt to identify all possible reasons why environments of similar situations differ so much from in terms of epidemic patterns and the number of casualties. Mere ``luck" or ``chances" play a more significant role than most off us are led to believe.
Thus, we should be prepared for a worst possible scenarios, which only a stochastic model can provide with probabilistic qualification, such as e.g., a 95\% confidence level.

\end{enumerate}

Our future research plan include

\begin{itemize}
\item[(i)] Conduct simulation experiments for time-nonhomogeneous BD and BDI process models to empirically validate the analytic results presented in this article.  The simulation will be done in a fashion similar to what was reported in Part II \cite{kobayashi:2020b} on time-homogeneous cases.

\item[(ii)] Develop a method to estimate the model parameters from observable data in simulations.  After validating its effectiveness, we should apply the method to real data.  We expect that the estimated model parameters could be used to predict the near-term behavior of the infection process $I(t)$.  

\item[(iii)] Thus far, we have primarily dealt with the process $I(t)$, which is a Markov process.  But the processes $B(t)$ and $R(t)$ are not Markov processes. This makes it difficult to obtain their PDFs in an exact form.  An approximate analysis based on saddle-point integration will be investigated.

\item[(iv)] The main advantage of our stochastic modelling approach is that it not only allows us to better understand the stochastic behavior of an infectious disease, but also permits us to generalize the results obtained thus far to more realistic situations, because our model is intrinsically linear.  One important extension will be to deal with different types of infectious diseases, and different types of infectious and infected individuals. These situations can be adequately represented by introducing different ``classes" of susceptible population, and multiple "types" of infectious diseases.  

\end{itemize}

\section*{Acknowledgments}
\addcontentsline{toc}{section}{Acknowledgments}
The author thanks Prof. Brian L. Mark of George Mason University and Dr. Pei Chen of Qualcomm for their advice and help in use of MATLAB.
\bibliographystyle{ieeetr}
\bibliography{infections}

\end{document}